\def\arxiv{1}
	\def\BibTeX{{\rm B\kern-.05em{\sc i\kern-.025em b}\kern-.08emT\kern-.1667em\lower.7ex\hbox{E}\kern-.125emX}}
	\def\BibTeX{{\rm B\kern-.05em{\sc i\kern-.025em b}\kern-.08emT\kern-.1667em\lower.7ex\hbox{E}\kern-.125emX}}
\newcommand{\algorithmicinput}{\textbf{input}}
\newcommand{\INPUT}{\item[\algorithmicinput]}
\newcommand{\algorithmicoutput}{\textbf{output}}
\newcommand{\OUTPUT}{\item[\algorithmicoutput]}
\crefname{nlem}{Lemma}{Lemmas}
\crefname{nprop}{Proposition}{Propositions}
\crefname{ncor}{Corollary}{Corollaries}
\crefname{nthm}{Theorem}{Theorems}
\crefname{exa}{Example}{Examples}
\crefname{assumption}{Assumption}{Assumptions}
\crefname{equation}{}{}
\def\balign#1\ealign{\begin{align}#1\end{align}}
\def\baligns#1\ealigns{\begin{align*}#1\end{align*}}
\def\balignat#1\ealign{\begin{alignat}#1\end{alignat}}
\def\balignats#1\ealigns{\begin{alignat*}#1\end{alignat*}}
\def\bitemize#1\eitemize{\begin{itemize}#1\end{itemize}}
\def\benumerate#1\eenumerate{\begin{enumerate}#1\end{enumerate}}
\newenvironment{talign*}
 {\csname align*\endcsname}
 {\endalign}
\newenvironment{talign}
 {\csname align\endcsname}
 {\endalign}
\def\balignst#1\ealignst{\begin{talign*}#1\end{talign*}}
\def\balignt#1\ealignt{\begin{talign}#1\end{talign}}%
\let\originalleft\left
\let\originalright\right
\renewcommand{\left}{\mathopen{}\mathclose\bgroup\originalleft}
\renewcommand{\right}{\aftergroup\egroup\originalright}
\def\mbi#1{\boldsymbol{#1}} %
\def\mbf#1{\mathbf{#1}}
\def\mc#1{\mathcal{#1}}
\def\reals{\mathbb{R}} %
\def\<{\left\langle} %
\def\>{\right\rangle}
\def\defeq{\triangleq} %
\def\bs{\backslash} %
\def\norm#1{\left\|{#1}\right\|} %
\newcommand{\twonorm}[1]{\norm{#1}_2} %
\newcommand{\inner}[2]{\langle{#1},{#2}\rangle} %
\providecommand{\argmax}{\mathop\mathrm{arg max}} %
\providecommand{\argmin}{\mathop\mathrm{arg min}}
\providecommand{\sign}{\mathop\mathrm{sign}}
\newtheorem{theorem}{Theorem}
\renewenvironment{proof}{\noindent\textbf{Proof}\hspace*{1em}}{\qed\\}
\newenvironment{proof-sketch}{\noindent\textbf{Proof Sketch}
  \hspace*{1em}}{\qed\bigskip\\}
\newenvironment{proof-idea}{\noindent\textbf{Proof Idea}
  \hspace*{1em}}{\qed\bigskip\\}
\newenvironment{proof-of-lemma}[1][{}]{\noindent\textbf{Proof of Lemma {#1}}
  \hspace*{1em}}{\qed\\}
\newenvironment{proof-of-theorem}[1][{}]{\noindent\textbf{Proof of Theorem {#1}}
  \hspace*{1em}}{\qed\\}
\newenvironment{proof-attempt}{\noindent\textbf{Proof Attempt}
  \hspace*{1em}}{\qed\bigskip\\}
\newtheorem{proposition}[theorem]{Proposition}
\newcommand{\ttt}[1]{\texttt{#1}}
\newcommand{\gt}{\mathbf{y}}
\newcommand{\predgt}{\hat{\mathbf{y}}}
\newcommand{\clim}{\mathbf{c}}
\newcommand{\anom}{\mathbf{a}}
\newcommand{\predanom}{\hat{\mathbf{a}}}
\newcommand{\skill}{\text{skill}}
\newcommand{\doy}{\texttt{doy}} %
\newcommand{\monthday}{\texttt{monthday}} %
\newcommand{\trainset}{\mc{T}}
\newcommand{\US}{U.S.\xspace}
\newcommand{\dataset}{{\texttt{SubseasonalRodeo} dataset}\xspace}
\newcommand{\Dataset}{{\texttt{SubseasonalRodeo} Dataset}\xspace}
\newcommand{\autoknn}{\ttt{AutoKNN}\xspace}
\newcommand{\stepwise}{\ttt{MultiLLR}\xspace}
\newcommand{\featset}{\mc{F}}
\newcommand{\score}{\ttt{v}}
\newcolumntype{R}[1]{>{\raggedleft\let\newline\\\arraybackslash\hspace{0pt}}m{#1}}
\newcommand{\subalign}[1]{%
  \vcenter{%
    \Let@ \restore@math@cr \default@tag
    \baselineskip\fontdimen10 \scriptfont\tw@
    \advance\baselineskip\fontdimen12 \scriptfont\tw@
    \lineskip\thr@@\fontdimen8 \scriptfont\thr@@
    \lineskiplimit\lineskip
    \ialign{\hfil$\m@th\scriptstyle##$&$\m@th\scriptstyle{}##$\crcr
      #1\crcr
    }%
  }
}
\newcommand{\ourabstract}{
\begin{abstract}
Water managers in the western United States (U.S.) rely on longterm forecasts of temperature and precipitation to prepare for droughts and other wet weather extremes.
To improve the accuracy of these longterm forecasts, the \US Bureau of Reclamation and the National Oceanic and Atmospheric Administration (NOAA) launched the Subseasonal Climate Forecast Rodeo, a year-long real-time forecasting challenge in which participants aimed to skillfully predict temperature and precipitation in the western U.S.\ two to four weeks and four to six weeks in advance.
Here we present and evaluate our machine learning approach to the Rodeo and release our \dataset, collected to train and evaluate our forecasting system.

Our system is an ensemble of two nonlinear regression models.
The first integrates the diverse collection of meteorological measurements and dynamic model forecasts in the \dataset and prunes irrelevant predictors using a customized multitask feature selection procedure.
The second uses only historical measurements of the target variable (temperature or precipitation) and introduces multitask nearest neighbor features into a weighted local linear regression.
Each model alone is significantly more accurate than the debiased operational U.S. Climate Forecasting System (CFSv2), and our ensemble skill exceeds that of the top Rodeo competitor for each target variable and forecast horizon.
Moreover, over 2011-2018, an ensemble of our regression models and debiased CFSv2
improves debiased CFSv2 skill by 40-50\% for temperature and 129-169\% for precipitation.
We hope that both our dataset and our methods will help to advance the state of the art in subseasonal forecasting.
\end{abstract}}
\title[Subseasonal Forecasting with Machine Learning]{Improving Subseasonal Forecasting in the Western U.S.\@ with Machine Learning}
\author{Jessica Hwang}
\affiliation{%
  \institution{Department of Statistics, Stanford University}
  \city{Stanford}
  \state{California}
}
\email{jjhwang@stanford.edu}
\author{Paulo Orenstein}
\affiliation{%
  \institution{Department of Statistics, Stanford University}
  \city{Stanford}
  \state{California}
}
\email{pauloo@stanford.edu}
\author{Judah Cohen}
\affiliation{%
  \institution{Atmospheric and Environmental Research}
  \city{Lexington}
  \state{MA}
}
\email{jcohen@aer.com}
\author{Karl Pfeiffer}
\affiliation{%
  \institution{Atmospheric and Environmental Research}
  \city{Lexington}
  \state{MA}
}
\email{kpfeiffe@aer.com}
\author{Lester Mackey}
\affiliation{%
  \institution{Microsoft Research New England}
  \city{Cambridge}
  \state{MA}
}
\email{lmackey@microsoft.com}
\keywords{Subseasonal climate forecasting, 
Temperature, Precipitation, Multitask feature selection, Multitask KNN, Ensembling, Western United States, Drought, Water management}}
\begin{document}
\opt{kdd,arxiv}{\maketitle}
\opt{oldarxiv}{\newpage}

\section{Introduction}
Water and fire managers in the western United States (\US) rely on \emph{subseasonal forecasts}---forecasts of temperature and precipitation two to six weeks in advance---to allocate water resources, manage wildfires, and prepare for droughts and other weather extremes \citep{white2017}.
While purely physics-based numerical weather prediction dominates the landscape of short-term weather forecasting, such deterministic methods have a limited \emph{skillful} (i.e., accurate) forecast horizon due to the chaotic nature of their differential equations \citep{lorenz1963deterministic}. Prior to the widespread availability of operational numerical weather prediction, weather forecasters made predictions using their knowledge of past weather patterns and climate (sometimes called \emph{the method of analogs}) \citep{nebeker1995calculating}. The current availability of ample meteorological records and  high-performance computing offers the opportunity to blend physics-based and statistical machine learning (ML) approaches to extend the skillful forecast horizon.

This data and computing opportunity, coupled with the critical operational need, motivated the \US Bureau of Reclamation and the National Oceanic and Atmospheric Administration (NOAA) to conduct the Subseasonal Climate Forecast Rodeo \citep{nowak2017sub}, a year-long real-time forecasting challenge, in which participants aimed to skillfully predict temperature and precipitation in the western \US two to four weeks and four to six weeks in advance. To meet this challenge, we developed an ML-based forecasting system and a \dataset \citep{dataset2018} suitable for training and benchmarking subseasonal forecasts.

ML approaches have been successfully applied to both
short-term ($< 2$ week) weather forecasting \citep{doi:10.1175/WAF-D-17-0188.1,doi:10.1175/WAF-D-17-0006.1,doi:10.1175/MWR-D-17-0307.1, xingjian2015convolutional, hernandez2016rainfall, qiu2017short, ghaderi2017deep, kuligowski1998localized, horenko2008automated, radhika2009atmospheric, karevan2016spatio, cofino2002bayesian, grover2015deep} and longer-term climate prediction \citep{doi:10.1175/JCLI-D-15-0648.1,doi:10.1175/JAMC-D-13-0181.1,doi:10.1002/2017GL075674, iglesias2015examination,cohen2019s2s}, but mid-term subseasonal outlooks, which depend on both local weather and global climate variables, still lack skillful forecasts \citep{robertson2015}.

Our subseasonal ML system is an ensemble of two nonlinear regression models: a local linear regression model with multitask feature selection (\stepwise) and a weighted local autoregression enhanced with multitask $k$-nearest neighbor features (\autoknn). 
The \stepwise model introduces candidate regressors from each data source in the \dataset and then prunes irrelevant predictors using a multitask backward stepwise criterion designed for the forecasting skill objective. 
The {\autoknn} model extracts features only from the target variable (temperature or precipitation), combining lagged measurements with a skill-specific form of nearest-neighbor modeling.
For each of the two Rodeo target variables (temperature and precipitation) and forecast horizons (weeks 3-4 and weeks 5-6), this paper makes the following principal contributions:
\benumerate
\item We release a new \dataset suitable for training and benchmarking subseasonal forecasts.
\item We introduce two subseasonal regression approaches tailored to the forecast skill objective, one of which uses only features of the target variable.
\item We introduce a simple ensembling procedure that provably improves average skill whenever average skill is positive.
\item We show that each regression method alone outperforms the Rodeo benchmarks, including a debiased version of the operational \US Climate Forecasting System (CFSv2), and that our ensemble outperforms the top Rodeo competitor.
\item We show that, over 2011-2018, an ensemble of our models and debiased CFSv2 improves debiased CFSv2 skill by 40-50\% for temperature and 129-169\% for precipitation.
\eenumerate
We hope that this work will expose the ML community to an important problem ripe for ML development---improving subseasonal forecasting for water and fire management,
demonstrate that ML tools can lead to significant improvements in subseasonal forecasting skill, and stimulate future development with the release of our user-friendly Python Pandas \dataset.

\subsection{Related Work} \label{sec:related}
While statistical modeling was common in the early days of weather and climate forecasting \citep{nebeker1995calculating}, purely physics-based dynamical modeling of atmosphere and oceans rose to prominence in the 1980s and has been the dominant forecasting paradigm in major climate prediction centers since the 1990s \cite{barnston2012skill}.  
Nevertheless, skillful statistical machine learning approaches have been developed for short-term weather forecasting with outlooks ranging from hours to two weeks ahead \citep{doi:10.1175/WAF-D-17-0188.1,doi:10.1175/WAF-D-17-0006.1,doi:10.1175/MWR-D-17-0307.1, xingjian2015convolutional, hernandez2016rainfall, qiu2017short, ghaderi2017deep, kuligowski1998localized, horenko2008automated, radhika2009atmospheric, karevan2016spatio, cofino2002bayesian, grover2015deep} and for coarse-grained long-term climate forecasts with target variables aggregated over months or years \citep{doi:10.1175/JCLI-D-15-0648.1,doi:10.1175/JAMC-D-13-0181.1,doi:10.1002/2017GL075674, iglesias2015examination,cohen2019s2s}.
Tailored machine learning solutions are also available for detecting and predicting weather extremes \citep{mcgovern2014enhancing,liu2016application,racah2017extremeweather}.
However, subseasonal forecasting, with its 2-6 week outlooks and biweekly granularity, is considered more difficult than either short-term weather forecasting or long-term climate forecasting, due to its complex dependence on both local weather and global climate variables \cite{white2017}.
We complement prior work by developing a dataset and an ML-based forecasting system suitable for improving temperature and precipitation prediction in this traditional `predictability desert' \citep{vitart2012subseasonal}.  

\section{The Subseasonal Climate Forecast Rodeo} 
\label{sec:rodeo}
The Subseasonal Climate Forecast Rodeo was a year-long, real-time forecasting competition in which, every two weeks, contestants submitted forecasts for average temperature ($^\circ$C) and total precipitation (mm) at two forecast horizons, 15-28 days ahead (weeks 3-4) and 29-42 days ahead (weeks 5-6). The geographic region of interest was the western contiguous United States, delimited by latitudes 25N to 50N and longitudes 125W to 93W, at a 1$^\circ$ by 1$^\circ$ resolution, for a total of $G=514$ grid points. The initial forecasts were issued on April 18, 2017 and the final on April 3, 2018.

Forecasts were judged on the spatial cosine similarity between predictions and observations adjusted by a long-term average.
More precisely, let $t$ denote a date represented by the number of days since January $1$, $1901$, and let $\texttt{year}(t)$, $\doy(t)$, and $\monthday(t)$ respectively denote the year, the day of the year, and the month-day combination (e.g., January $1$) associated with that date.
We associate with the two-week period beginning on $t$ an observed average temperature or total precipitation $\gt_{t} \in \mathbf{R}^G$ and an observed \emph{anomaly}
\balignt
\anom_{t} &= \gt_{t} - \clim_{\monthday(t)}, 
\ealignt
where
\balignt
\clim_{d} 
	&\defeq \frac{1}{30} \sum_{\subalign{&t\,:\,\monthday(t)=d,\\ &1981\leq\texttt{year}(t)\leq 2010}} \gt_{t}
\ealignt
is the \emph{climatology} or long-term average over 1981-2010 for the month-day combination $d$. Contestant forecasts $\predgt_{t}$ were judged on the cosine similarity---termed \emph{skill} in meteorology---between their forecast anomalies $\predanom_{t} = \predgt_{t} - \clim_{\monthday(t)}$ and the observed anomalies:
\balignt\label{eqn:skill}
\skill(\predanom_{t}, \anom_{t}) \defeq \cos(\predanom_{t}, \anom_{t}) = \frac{\inner{\predanom_{t}}{ \anom_{t}}}{\|\predanom_{t}\|_2\|\anom_{t}\|_2}.
\ealignt

To qualify for a prize, contestants had to achieve higher mean skill over all forecasts than two benchmarks, a debiased version of the physics-based operational \US Climate Forecasting System (CFSv2) and a damped persistence forecast. The official contest CFSv2 forecast for $t$, an average of 32 operational forecasts based on 4 model initializations and 8 lead times, was debiased by adding the mean observed temperature or precipitation for $\monthday(t)$ over 1999-2010 and subtracting the mean CFSv2 reforecast, an average of 8 lead times for a single initialization, over the same period.
An exact description of the damped persistence model was not provided, but the Rodeo organizers reported it relied on ``seasonally developed regression coefficients based on the historical climatology period of 1981-2010 that relate observations of the past two weeks to the forecast outlook periods on a grid cell by grid cell basis.''

\section{Our \Dataset}
\label{sec:data}
Since the Rodeo did not provide data for training predictive models, we constructed our own \dataset from a diverse collection of data sources. Unless otherwise noted below, spatiotemporal variables were interpolated to a 1$^\circ$ by 1$^\circ$ grid and restricted to the contest grid points, and daily measurements were replaced with average measurements over the ensuing two-week period.
The \dataset is available for download at \cite{dataset2018}, and \cref{sec:data_supp} provides additional details on data sources, processing, and variables ultimately not used in our solution.

\textbf{Temperature\quad} Daily maximum and minimum temperature measurements at 2 meters (\texttt{tmax} and \ttt{tmin}) from 1979 onwards were obtained from NOAA's Climate Prediction Center (CPC) Global Gridded Temperature dataset and converted to $^\circ$C; the same data source was used to evaluate contestant forecasts. The official contest target temperature variable was $\ttt{tmp2m}\defeq \frac{\ttt{tmax} + \ttt{tmin}}{2}$.

\textbf{Precipitation\quad} 
Daily precipitation (\texttt{precip}) data from 1979 onward were obtained from NOAA's CPC Gauge-Based Analysis of Global Daily Precipitation \citep{xie2010cpc} and converted to mm; the same data source was used to evaluate contestant forecasts.
We augmented this dataset with daily \US precipitation data in mm from 1948-1979 from the CPC Unified Gauge-Based Analysis of Daily Precipitation over CONUS.
Measurements were replaced with sums over the ensuing two-week period.

\textbf{Sea surface temperature and sea ice concentration\,} NOAA's Optimum Interpolation Sea Surface Temperature (SST) dataset provides SST and sea ice concentration data, daily from 1981 to the present \citep{reynolds2007daily}. 
After interpolation, we extracted the top three principal components (PCs), $(\texttt{sst\_i})_{i=1}^3$ and $(\texttt{icec\_i})_{i=1}^3$, across grid points in the Pacific basin region (20S to 65N, 150E to 90W) based on PC loadings from 1981-2010.

\textbf{Multivariate ENSO index (MEI)\quad} Bimonthly MEI values (\texttt{mei}) from 1949 to the present, were obtained from NOAA/Earth System Research Laboratory \citep{wolter1993monitoring,wolter1998measuring}. %
The MEI is a scalar summary of six variables (sea-level pressure, zonal and meridional surface wind components, SST, surface air temperature, and sky cloudiness) associated with El Ni\~no/Southern Oscillation (ENSO), an ocean-atmosphere coupled climate mode.

\textbf{Madden-Julian oscillation (MJO)\quad} Daily MJO values since 1974 are provided by the Australian Government Bureau of Meteorology \citep{wheeler2004all}. MJO is a metric of tropical convection on daily to weekly timescales and can have significant impact on the western United States' subseasonal climate. We extract measurements of \texttt{phase} and \texttt{amplitude} on the target date but do not aggregate over the two-week period.

\textbf{Relative humidity and pressure\quad} NOAA's National Center for Environmental Prediction (NCEP)/National Center for Atmospheric Research Reanalysis dataset \citep{kalnay1996ncep} contains daily relative humidity  (\texttt{rhum}) near the surface (sigma level 0.995) from 1948 to the present and daily pressure at the surface (\texttt{pres}) from 1979 to the present.

\textbf{Geopotential height\quad} To capture polar vortex variability, we obtained daily mean geopotential height at 10mb since 1948 from the NCEP Reanalysis dataset \citep{kalnay1996ncep} and extracted the top three PCs $(\texttt{wind\_hgt\_10\_i})_{i=1}^3$ based on PC loadings from 1948-2010. No interpolation or contest grid restriction was performed.

\textbf{NMME\quad} The North American Multi-Model Ensemble (NMME) is a collection of physics-based forecast models from various modeling centers in North America \citep{kirtman2014north}. Forecasts issued monthly from the Cansips, CanCM3, CanCM4, CCSM3, CCSM4, GFDL-CM2.1-aer04, GFDL-CM2.5 FLOR-A06 and FLOR-B01, NASA-GMAO-062012, and NCEP-CFSv2 models were downloaded from the IRI/LDEO Climate Data Library. Each forecast contains monthly mean predictions from 0.5 to 8.5 months ahead. We derived forecasts by taking a weighted average of the monthly predictions with weights proportional to the number of target period days that fell into each month. We then formed an equally-weighted average (\texttt{nmme\_wo\_ccsm3\_nasa}) of all models save CCSM3 and NASA (which were not reliably updated during the contest). 
Another feature was created by averaging the most recent monthly forecast of each model save CCSM3 and NASA (\texttt{nmme0\_wo\_ccsm3\_nasa}).
\section{Forecasting Models}
\label{sec:methods}
In developing our forecasting models, we focused our attention on computationally efficient methods that exploited the \emph{multitask}, i.e., multiple grid point, nature of our problem and incorporated the unusual forecasting skill objective function \cref{eqn:skill}.
For each target variable (temperature or precipitation) and horizon (weeks 3-4 or 5-6), our forecasting system relies on two regression models trained using two sets of features derived from the \dataset. 
The first model, described in \cref{sec:stepwise}, introduces lagged measurements from all data sources in the \dataset as candidate regressors. For each target date, irrelevant regressors are pruned automatically using multitask feature selection tailored to the cosine similarity objective.
Our second model, described in \cref{sec:knn}, chooses features derived from the target variable (temperature or precipitation) using a skill-specific nearest neighbor strategy.
The final forecast is obtained by ensembling the predictions of these two models in a manner well-suited to the cosine similarity objective.
\subsection{Local Linear Regression with Multitask Feature Selection (\stepwise)}
\label{sec:stepwise}
Our first model uses lagged measurements from each of the data sources in the \dataset as candidate regressors, with lags selected based on the temporal resolution of the measurement and the frequency of the data source update.
The y-axis of \cref{fig:stepwise_features} provides an explicit list of candidate regressors for each prediction task. 
The suffix $\ttt{anom}$ indicates that feature values are anomalies instead of raw measurements, the substring $\ttt{shift}\ell$ indicates a lagged feature with measurements from $\ell$ days prior, and the constant feature \ttt{ones} equals $1$ for all datapoints.

We combine predictors using local linear regression with locality determined by the day of the year\footnote{As a matter of convention, we treat Feb. 29 as the same day as Feb. 28 when  computing $\doy$, so that $\doy(t) \in \{1, \dots, 365\}$.} (\cref{alg:wllr}).
Specifically, the training data for a given target date is restricted to a 56-day (8-week) span around the target date's day of the year ($s = 56$). For example, if the target date is May 2, 2017, the training data consists of days within 56 days of May 2 in any year.  We employ equal datapoint weighting ($w_{t,g} = 1$) and no offsets ($b_{t,g} = 0$).

\begin{algorithm}[tb]
  \caption{Weighted Local Linear Regression}
  \label{alg:wllr}
  \begin{algorithmic}
  	\INPUT test day of year $d^*$; span $s$;
    training outcomes, features, offsets, and weights $(y_{t,g}, \mbf{x}_{t,g}, b_{t,g}, w_{t,g})_{t\in\trainset, g\in \{1,\dots, G\}}$
	\STATE $\mc{D} \defeq \{t \in \trainset:\frac{365}{2} - ||\doy(t)-d^*| -\frac{365}{2}| \leq s\}$
    \FOR{grid points $g = 1$ {\bfseries to} $G$}
    \STATE
    $%
    \mbi{\hat{\beta}}_g \in \argmin_{\mbi{\beta}} 
    \sum_{t \in \mc{D}}
    w_{t,g}(y_{t,g} - b_{t,g} - {\mbi{\beta}}^\top{\mbf{x}_{t,g}})^2
    $
    \ENDFOR
    \OUTPUT coefficients $(\mbi{\hat{\beta}}_g)_{g=1}^G$
  \end{algorithmic}
\end{algorithm}

As we do not expect all features to be relevant at all times of year, we use multitask feature selection tailored to the cosine objective to automatically identify relevant features for each target date. The selection is multitask in that variables for a target date are selected jointly for all grid points, while the coefficients associated with those variables are fit independently for each grid point using local linear regression.

The feature selection is performed for each target date using a customized backward stepwise procedure (\cref{alg:stepwise}) built atop the local linear regression subroutine. At each step of the backward stepwise procedure, we regress the outcome on all remaining candidate predictors; the regression is fit separately for each grid point. A measure of predictive performance (described in the next paragraph) is computed, and the candidate predictor that decreases predictive performance the least is removed. The procedure terminates when no candidate predictor can be removed from the model without decreasing predictive performance by more than the tolerance threshold $\ttt{tol}=0.01$.

\begin{algorithm}[tb]
  \caption{Multitask Backward Stepwise Feature Selection}
  \label{alg:stepwise}
  \begin{algorithmic}
  	\INPUT test day of year $d^*$; set of feature identifiers $\featset$;
    base regression procedure \ttt{BaseReg}; tolerance $\ttt{tol}$
    \STATE $\mc{D} \defeq \{t:\doy(t)=d^*\}$; $\texttt{converged} = \texttt{False}$
    \STATE $\score = \ttt{LOYOCV}(d^*, \ttt{BaseReg}, \featset)$
    \WHILE{\texttt{not converged}}
    \FORALL{feature identifiers  $j \in \featset$} 
    \STATE $(\predanom_t)_{t\in\mc{D}} \leftarrow \ttt{LOYOCV}(d^*, \ttt{BaseReg}, \featset \bs \{j\})$
    \STATE $\score_j = \frac{1}{|\mc{D}|}\sum_{t\in\mc{D}}\skill(\predanom_t, \anom_t)$
    \ENDFOR
    \IF{$\ttt{tol} > \score - \max_{j\in\featset}\score_j$}
    	\STATE  $j^* = \argmax_{j\in\featset}\score_j$; $\score = \score_{j^*}$; 
        $\featset = \featset\bs \{j^*\}$
    \ELSE 
    	\STATE $\ttt{converged} = \ttt{True}$
    \ENDIF
    \ENDWHILE
    \OUTPUT selected feature identifiers $\featset$
  \end{algorithmic}
\end{algorithm}

Our measure of predictive performance is the average leave-one-year-out cross-validated (\ttt{LOYOCV}) skill on the target date's day-of-year, where the average is taken across all years in the training data. The \ttt{LOYOCV} skill for a target date $t$ is the cosine similarity achieved by holding out a year's worth of data around $t$, fitting the model on the remaining data, and predicting the outcome for $t$. When forecasting weeks 3-4, we hold out the data from 29 days before $t$ through 335 days after $t$; for weeks 5-6, we hold out the data from 43 days before through 321 days after $t$. This ensures that the model is not fitted on future dates too close to $t$.
For $n$ training dates, $Y$ training years, and $d$ features, the \stepwise running time is $O(nd^2+Yd^3)$ per grid point and step. In our experiments in \cref{sec:experiments}, we run the per grid point regressions in parallel on each step, $d$ ranges from 20 to 23, and the average number of steps is $13$.

\subsection{Multitask $k$-Nearest Neighbor Autoregression (\autoknn)} \label{sec:knn}
Our second model is a weighted local linear regression (\cref{alg:wllr}) with features derived exclusively from historical measurements of the target variable (temperature or precipitation). When predicting weeks 3-4, we include lagged temperature or precipitation anomalies from 29 days, 58 days, and 1 year prior to the target date; when predicting weeks 5-6, we use 43 days, 86 days, and 1 year. These lags are chosen because the most recent data available to us are from 29 days before the target date when predicting weeks 3-4 and 58 days before the target date when predicting weeks 5-6.

In addition to fixed lags, we include the constant intercept  \texttt{ones} and the observed anomaly patterns of the target variable on similar dates in the past (\cref{alg:knn}). 
Our measure of similarity is tailored to the cosine similarity objective: similarity between a target date and another date is measured as the mean skill observed when the historical anomalies preceding the candidate date are used to forecast the historical anomalies of the target date. The mean skill is computed over a history of $H=60$ days, starting 1 year prior to the target date (lag $\ell = 365$). Only dates with observations fully observed prior to the forecast issue date are considered viable. We find the 20 viable candidate dates with the highest similarity to the target date and scale each neighbor date's observed anomaly vector so that it has a standard deviation equal to 1. 
The resulting features are \texttt{knn1} (the most similar neighbor) through \texttt{knn20} (the 20th most similar neighbor).
We find the $k = 20$ top neighbors for each of $n$ training dates in parallel, using $O(knHG)$ time per date. 

\begin{algorithm}[tb]
  \caption{Multitask $k$-Nearest Neighbor Similarities}
  \label{alg:knn}
  \begin{algorithmic}
	\INPUT test date $t^*$; training anomalies $(\anom_{t})_t$; lag $\ell$; 
    history $H$\FORALL{training dates $t$}
    \STATE $\texttt{sim}_{t} = \frac{1}{H}\sum_{h=0}^{H-1} \skill(\anom_{t-\ell-h}, \anom_{t^*-\ell-h})$
    \ENDFOR
    \OUTPUT similarities $(\texttt{sim}_{t})_t$
  \end{algorithmic}
\end{algorithm}

To predict a given target date, we regress onto the three fixed lags, the constant intercept feature \texttt{ones}, and either \texttt{knn1} through \texttt{knn20} (for temperature) or \texttt{knn1} only (for precipitation), treating each grid point as a separate prediction task. We found that including \texttt{knn2} through \texttt{knn20} did not lead to improved performance for predicting precipitation. For each grid point, we fit a weighted local linear regression, with weights $w_{t,g}$ given by $1$ over the variance of the target anomaly vector. As with \stepwise, locality is determined by the day of the year. For predicting precipitation, we restrict the training data to a 56-day span $s$ around the target date's day of the year. For predicting temperature, we use all dates.
In each case, we use a climatology offset ($b_{t,g} = c_{\monthday(t),g}$) so that the effective target variable is the measurement anomaly rather than the raw measurement.
Given $d$ features and $n$ training dates, the final regression is carried out in $O(nd^2)$ time per grid point. In our experiments in \cref{sec:experiments}, per grid point regressions were performed in parallel, and $d = 24$ for temperature and $d = 5$ for precipitation.

\subsection{Ensembling} \label{sec:ensemble}

Our final forecasting model is obtained by ensembling the predictions of the \stepwise and \autoknn models. Specifically, for a given target date, we take as our ensemble forecast anomalies the average of the $\ell_2$-normalized predicted anomalies of the two models: 
\balignt
\label{eqn:ens_stepwise_knn}
\predanom_{\textrm{ensemble}} \defeq \frac{1}{2}\frac{\predanom_{\textrm{multillr}}}{\twonorm{\predanom_\textrm{multillr}}} + \frac{1}{2}\frac{\predanom_\textrm{autoknn}}{\twonorm{\predanom_\textrm{autoknn}}}.
\ealignt
The $\ell_2$ normalization is motivated by the following result, which implies that the skill of $\predanom_{\textrm{ensemble}}$ is strictly better than the average skill of $\predanom_\textrm{multillr}$ and $\predanom_\textrm{autoknn}$ whenever that average skill is positive.

\begin{proposition}
\label{prop:ensemble}
Consider an observed anomaly vector $\anom$ and $m$ distinct forecast anomaly vectors $(\predanom_{i})_{i=1}^m$.
For any vector of weights $\mbf{p}\in \reals^m$ with $\sum_{i=1}^m p_i = 1$ and $p_i \geq 0$,
let %
\balignt
\bar{\anom}_{(\mbf{p})} \defeq \sum_{i=1}^m p_i \frac{\predanom_i}{\twonorm{\predanom_i}}
\ealignt
be the weighted average of the $\ell_2$-normalized forecast anomalies.
Then,
\balignt
\sign(\sum_{i=1}^m p_i \cos(\predanom_i, \anom)) &= \sign(\cos(\bar{\anom}_{(\mbf{p})}, \anom))
\ealignt
and
\balignt
|\sum_{i=1}^m p_i \cos(\predanom_i, \anom)| &\leq |\cos(\bar{\anom}_{(\mbf{p})}, \anom)|, 
\ealignt
with strict inequality whenever $\sum_{i=1}^m p_i \cos(\predanom_i, \anom) \neq 0$.
Hence, whenever the weighted average of individual anomaly skills is positive,
the skill of $\bar{\anom}_{(\mbf{p})}$ is strictly greater than the weighted average of the individual skills.
\end{proposition}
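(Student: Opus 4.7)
The plan is to reduce both claims to a direct calculation of $\cos(\bar{\anom}_{(\mbf{p})}, \anom)$ and then apply the Euclidean triangle inequality to bound $\twonorm{\bar{\anom}_{(\mbf{p})}}$ from above by $1$. The key observation is that the numerator of $\cos(\bar{\anom}_{(\mbf{p})}, \anom)$ is already linear in the individual normalized predictions, while the denominator is at most $1$ by convexity of the norm on the unit sphere.

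First, I would use bilinearity of the inner product to compute
\balignst
\inner{\bar{\anom}_{(\mbf{p})}}{\anom} = \sum_{i=1}^m p_i \frac{\inner{\predanom_i}{\anom}}{\twonorm{\predanom_i}} = \twonorm{\anom} \sum_{i=1}^m p_i \cos(\predanom_i, \anom).
\ealignst
Dividing by $\twonorm{\bar{\anom}_{(\mbf{p})}} \twonorm{\anom}$ (assuming for the moment that $\bar{\anom}_{(\mbf{p})} \neq 0$ and $\anom \neq 0$, else the claim is vacuous) yields the key identity
\balignst
\cos(\bar{\anom}_{(\mbf{p})}, \anom) = \frac{\sum_{i=1}^m p_i \cos(\predanom_i, \anom)}{\twonorm{\bar{\anom}_{(\mbf{p})}}}.
\ealignst
Since the denominator is strictly positive, the sign claim follows immediately, and for the magnitude claim it suffices to show $\twonorm{\bar{\anom}_{(\mbf{p})}} \leq 1$ with strict inequality in the relevant case.

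The upper bound on $\twonorm{\bar{\anom}_{(\mbf{p})}}$ follows from the triangle inequality: $\twonorm{\bar{\anom}_{(\mbf{p})}} \leq \sum_{i=1}^m p_i \twonorm{\predanom_i/\twonorm{\predanom_i}} = \sum_{i=1}^m p_i = 1$. Combining this with the identity above gives $|\cos(\bar{\anom}_{(\mbf{p})}, \anom)| \geq |\sum_i p_i \cos(\predanom_i, \anom)|$. Strict convexity of the Euclidean norm on the unit sphere then promotes this to a strict inequality whenever the normalized predictions are not all identical, which I would deduce from the distinctness hypothesis (interpreted so that the normalized anomalies differ on the support of $\mbf{p}$). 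The final skill claim then follows by observing that when $\sum_i p_i \cos(\predanom_i, \anom) > 0$, the division by $\twonorm{\bar{\anom}_{(\mbf{p})}} < 1$ strictly inflates this quantity.

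The main obstacle is handling the degenerate case in the strict-inequality assertion: if two $\predanom_i$ point in the same direction but have different magnitudes, they are formally ``distinct'' but share the same normalized form, making the triangle inequality tight. I would address this either by interpreting ``distinct'' as ``distinct after $\ell_2$-normalization,'' or by noting that in the tight case the individual skills are all equal to $\cos(\bar{\anom}_{(\mbf{p})}, \anom)$, so the conclusion ``ensemble skill strictly exceeds the weighted average'' still holds trivially as an equality that is vacuously compatible with the stated contest conclusion. A secondary minor issue is verifying $\bar{\anom}_{(\mbf{p})} \neq 0$ whenever $\sum_i p_i \cos(\predanom_i, \anom) \neq 0$, which follows directly from the key identity because the numerator would vanish if $\bar{\anom}_{(\mbf{p})} = 0$.
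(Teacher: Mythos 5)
Your proposal is correct and follows essentially the same route as the paper's proof: both rest on the identity $\textstyle\sum_i p_i \cos(\predanom_i, \anom) = \cos(\bar{\anom}_{(\mbf{p})}, \anom)\,\twonorm{\bar{\anom}_{(\mbf{p})}}$ followed by bounding $\twonorm{\bar{\anom}_{(\mbf{p})}} \le 1$ via the triangle (Jensen) inequality, with yours merely rearranging the identity as a quotient. Your flagged caveat is in fact a real gap in the paper's own argument---``distinct'' forecasts do not guarantee strict Jensen when two $\predanom_i$ are positive multiples of one another (or when $\mbf{p}$ is supported on a single index), so your proposed reading of distinctness as distinctness after $\ell_2$-normalization on the support of $\mbf{p}$ is the correct repair.
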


\begin{proof}
The sign claim follows from the equalities
\balignt
\sum_{i=1}^m p_i \cos(\predanom_i, \anom)
	&= \sum_{i=1}^m p_i \inner{\frac{\predanom_i}{\twonorm{\predanom_i}}}{\frac{\anom}{\twonorm{\anom}}} \\
	&= \textstyle\inner{\bar{\anom}_{(\mbf{p})}}{\frac{\anom}{\twonorm{\anom}}} 
    = \textstyle\cos(\bar{\anom}_{(\mbf{p})}, \anom) \  \|\bar{\anom}_{(\mbf{p})}\|_2. 
\ealignt
Since the forecasts are distinct, Jensen's inequality now yields the magnitude claim as
\balignt
\textstyle|\sum_{i=1}^m p_i \cos(\predanom_i, \anom)|
	&= |\cos(\bar{\anom}_{(\mbf{p})}, \anom)|\  \|\bar{\anom}_{(\mbf{p})}\|_2 \\
	&\leq |\cos(\bar{\anom}_{(\mbf{p})}, \anom)| \sum_{i=1}^m p_i \frac{\twonorm{\predanom_i}}{\twonorm{\predanom_i}} 
    = |\cos(\bar{\anom}_{(\mbf{p})}, \anom)|,
\ealignt
with strict inequality when $\sum_{i=1}^m p_i \cos(\predanom_i, \anom) \neq 0$.
\end{proof}

\section{Experiments}\label{sec:experiments}
In this section we evaluate our model forecasts over the Rodeo contest period and over each year following the climatology period and explore the relevant features inferred by each model.
Python 2.7 code to reproduce all experiments can be found at \url{https://github.com/paulo-o/forecast_rodeo}. 
\subsection{Contest Baselines}
For each target date in the contest period, the Rodeo organizers provided the skills of two baseline models, debiased CFSv2 and damped persistence. 
To provide baselines for evaluation outside the contest period, we reconstructed a debiased CFSv2 forecast approximating the contest guidelines.
We were unable to recreate the damped persistence model, as no exact description was provided.

We first reconstructed the undebiased 2011-2018 CFSv2 forecasts using the 6-hourly CFSv2 Operational Forecast dataset
and, for each month-day combination, computed long-term CFS reforecast averages over 1999-2010 using the $6$-hourly CFS Reforecast High-Priority Subset \citep{saha2014ncep}. For each target two-week period and horizon, we averaged eight forecasts, issued at 6-hourly intervals. For weeks 3-4, the eight forecasts came from 15 and 16 days prior to the target date; for weeks 5-6, we used 29 and 30 days prior.
For each date $t$, we then reconstructed the debiased CFSv2 forecast by subtracting the long-term CFS average and adding the observed target variable average over 1999-2010 for $\monthday(t)$ to the reconstructed CFSv2 forecast. 
Our reconstructed debiased forecasts are available for download at \cite{datasetCFSv2}, and \cref{sec:cfs_supp} provides more details on data sources and processing.

While the official contest CFSv2 baseline averages the forecasts of four model initializations, the CFSv2 Operational Forecast dataset only provides the forecasts of one model initialization (the remaining model initialization forecasts are released in real time but deleted after one week). Thus, our reconstruction does not precisely match the contest baseline, but it provides a similarly competitive benchmark.

\subsection{Contest Period Evaluation} \label{subsec:contest_period_evaluation}
We now examine how our methods perform over the contest period, consisting of forecast issue dates between April 18, 2017, and April 17, 2018. Forecast issue dates occur every two weeks, so we have 26 realized skills for each method and each prediction task. \cref{tab:contest_period_skills} shows the average skills for each of our methods and each of the baselines. All three of our methods outperform the official contest baselines (debiased CFSv2 and damped persistence), and our ensemble outperforms the top Rodeo competitor in all four prediction tasks.
Note that, while the remaining evaluations are of static modeling strategies, the competitor skills represent the real-time evaluations of forecasting systems that may have evolved over the course of the competition.

In \cref{fig:contest_year_histograms} we plot the 26 realized skills for each method. In each plot, the average skill over the contest period is indicated by a vertical line. The histograms indicate that both of the official contest baselines have a number of extreme negative skills, which drag down their average skill over the contest period. Our ensemble avoids these extreme negative skills. For both precipitation tasks, the worst realized skills of the two baseline methods are $-0.8$ or worse; by contrast, the worst realized skill of the ensemble is $-0.4$.
\begin{table*}[t!]%
\centering
\caption{Average contest-period skill of the proposed models \stepwise and \autoknn, the proposed ensemble of \stepwise and \autoknn (\emph{ensemble}), the official contest debiased-CFSv2 baseline, the official contest damped-persistence baseline (\emph{damped}), and the top-performing competitor in the Forecast Rodeo contest (\emph{top competitor}).
See \cref{subsec:contest_period_evaluation} for more details.}
\label{tab:contest_period_skills}
\begin{tabular}{l|ccccccc}
\toprule
task & multillr & autoknn & ensemble & contest debiased cfsv2 & damped & top competitor\\
\midrule
temperature, weeks 3-4 & 0.3079 & 0.2807 & \bf{0.3451} & 0.1589 & 0.1952 & 0.2855\\
temperature, weeks 5-6 & 0.2562 & 0.2817 & \bf{0.3025} & 0.2192 & \!\!-0.0762 & 0.2357\\
precipitation, weeks 3-4 & 0.1597 & 0.2156 & \bf{0.2364} & 0.0713 & \!\!-0.1463 & 0.2144\\
precipitation, weeks 5-6 & 0.1876 & 0.1870 & \bf{0.2315} & 0.0227 & \!\!-0.1613 & 0.2162\\
\bottomrule
\end{tabular}
\end{table*}
\begin{figure*}[h!]%
\centering
\includegraphics[width=.9\textwidth]{%
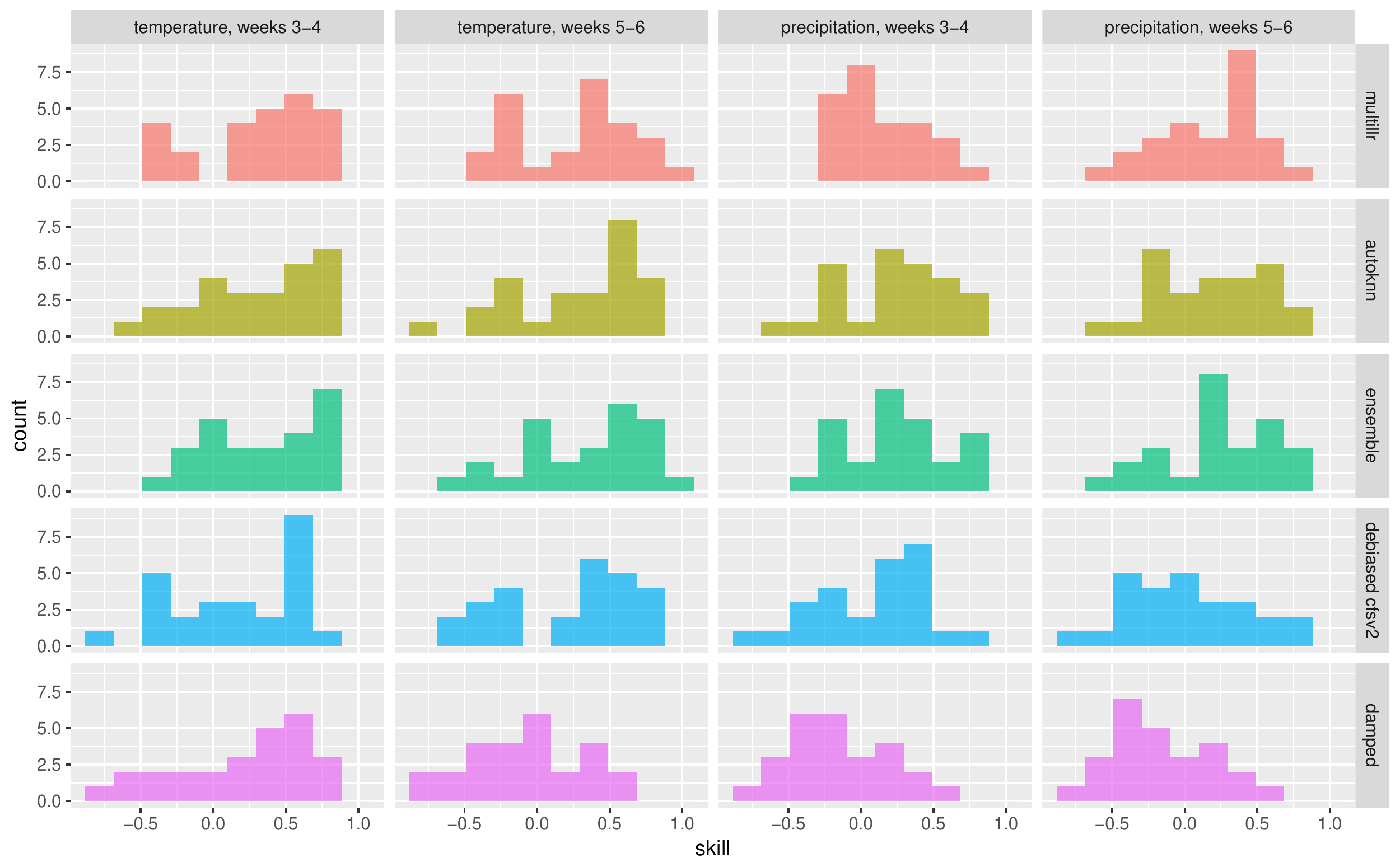}
\caption{Distribution of contest-period skills of the proposed models \stepwise and \autoknn, the proposed ensemble of \stepwise and \autoknn (\emph{ensemble}), the official contest debiased-CFSv2 baseline, and the official contest damped-persistence baseline (\emph{damped}).
See \cref{subsec:contest_period_evaluation} for more details.}
\label{fig:contest_year_histograms}
\end{figure*}
\begin{figure*}[h!]
\newcommand{\freqwidth}{0.247}%
\includegraphics[width=\freqwidth\textwidth]{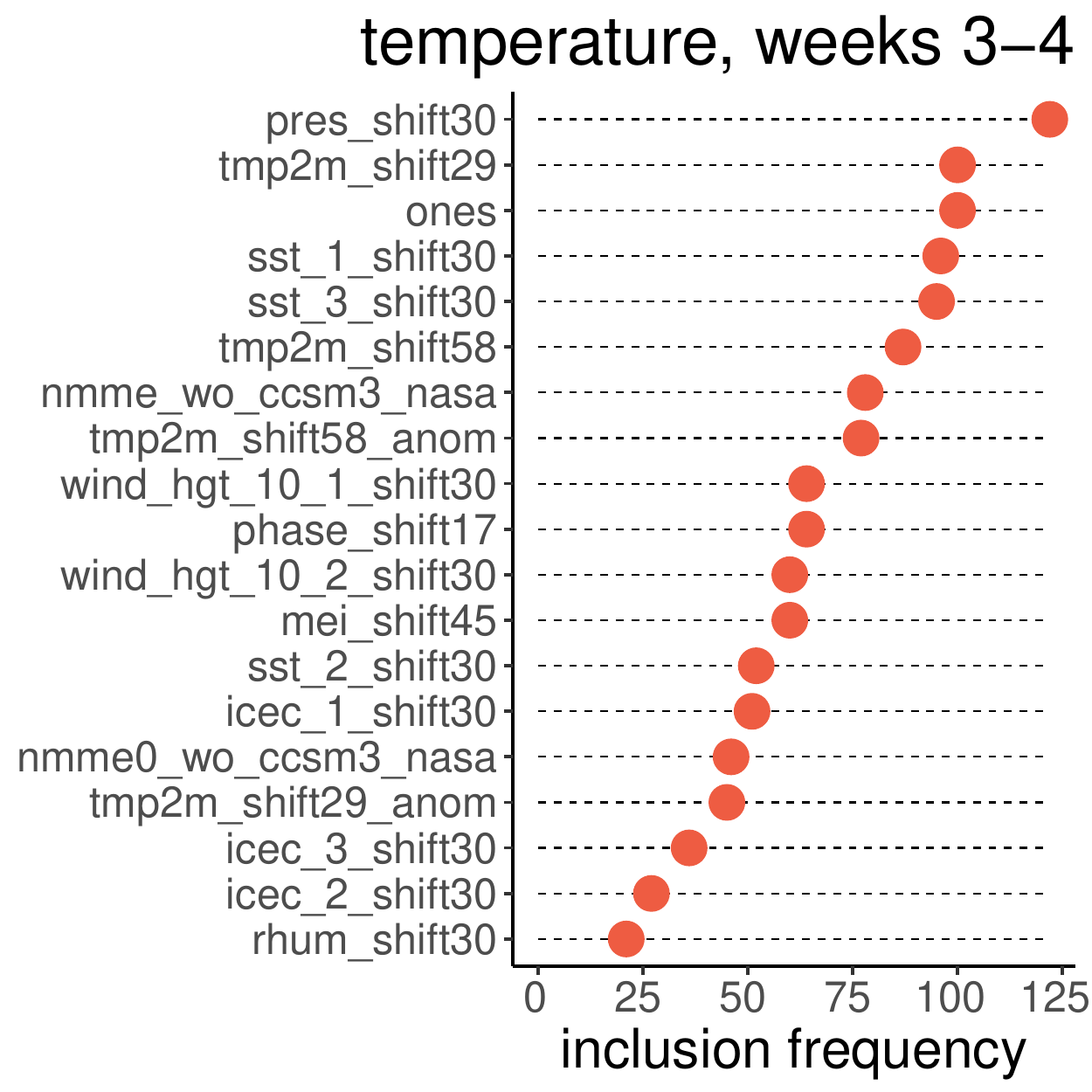}
\includegraphics[width=\freqwidth\textwidth]{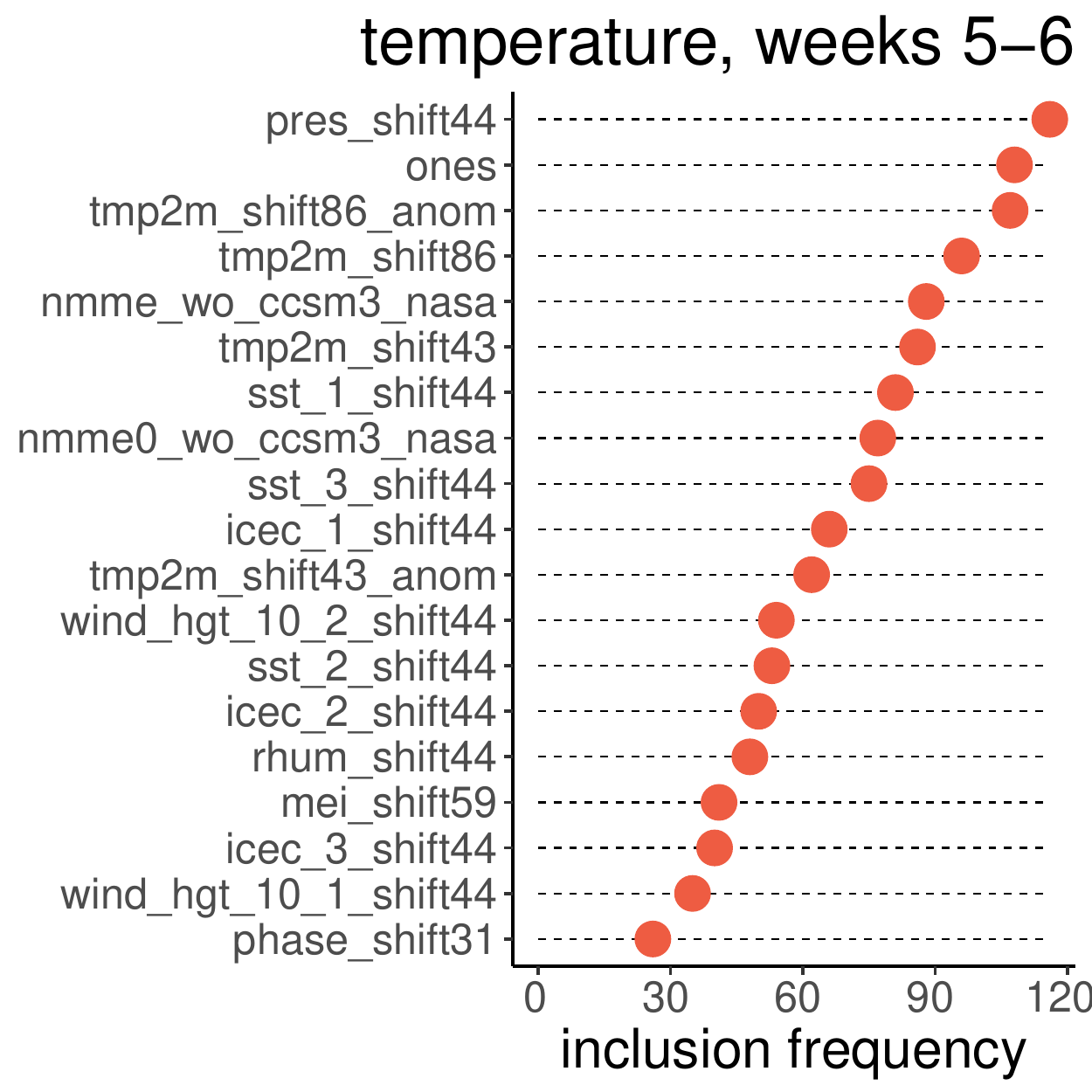}
\includegraphics[width=\freqwidth\textwidth]{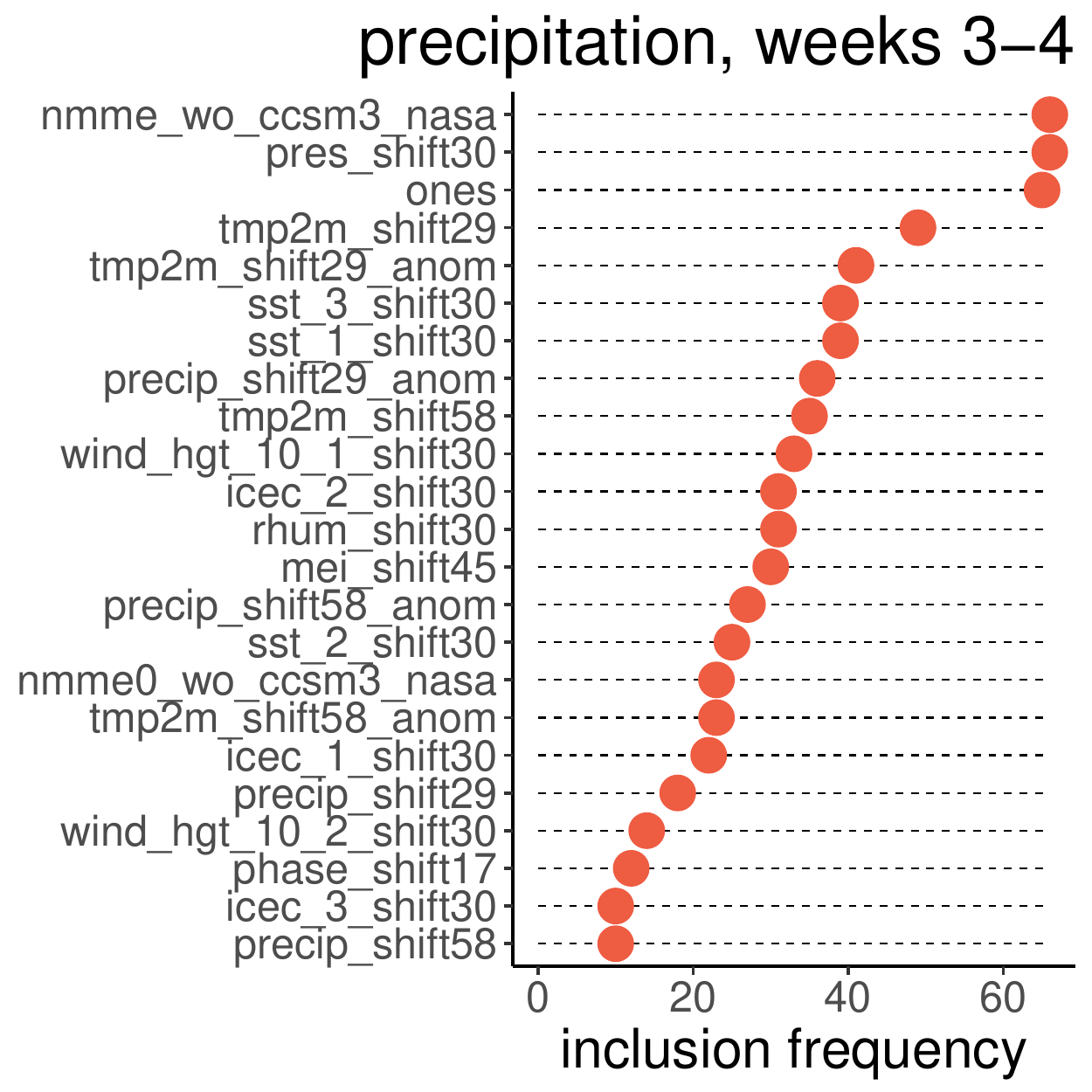}
\includegraphics[width=\freqwidth\textwidth]{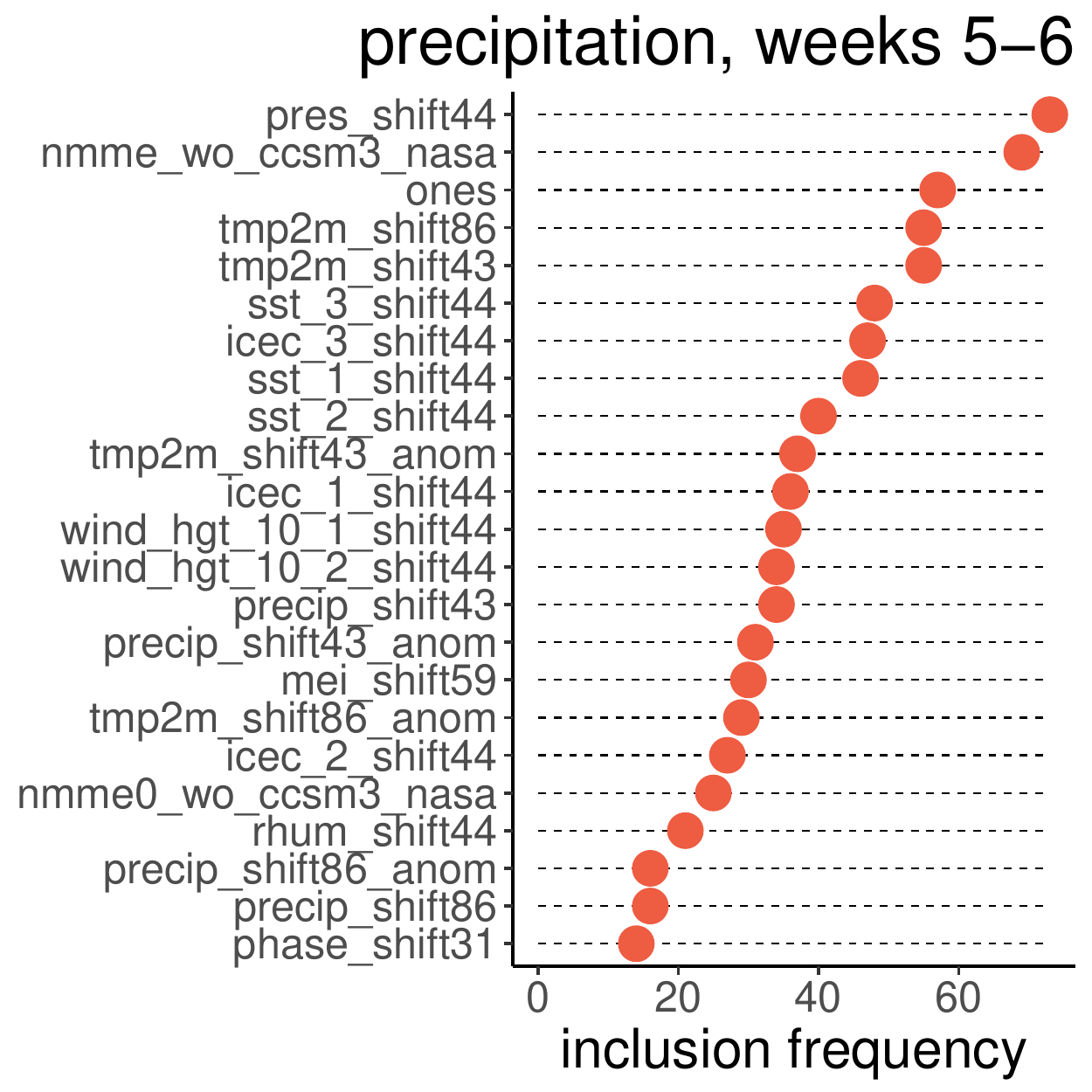}
\caption{Feature inclusion frequencies of all candidate variables for local linear regression with multitask feature selection (\stepwise) across all target dates in the historical forecast evaluation period (see \cref{sec:stepwise_explore}).}
\label{fig:stepwise_features}
\end{figure*}
\subsection{Historical Forecast Evaluation}\label{sec:historical}%
Next, we evaluate the performance of our methods over each year following the climatology period. 
That is, following the template of the contest period, we associate with each year in 2011-2017 a sequence of biweekly forecast issue dates between April 18 of that year and April 17 of the following year. For example, forecasts with submission dates between April 18, 2011 and April 17, 2012 are considered to belong to the evaluation year 2011. To mimic the actual real-time use of the forecasting system to produce forecasts for a particular target date, we train our models using only data available prior to the forecast issue date; for example, the forecasts issued on April 18, 2011 are only trained on data available prior to April 18, 2011. We compare our methods to the reconstructed debiased CFSv2 forecast.

\cref{tab:historical_skills} shows the average skills of our methods and the reconstructed debiased CFSv2 forecast (denoted by \emph{rec-deb-cfs}) in each year, 2011-2017. 
\stepwise, \autoknn, and the ensemble all achieve higher average skill than debiased CFSv2 on every task, save for \stepwise on the temperature, weeks 3-4 task.
The ensemble improves over the debiased CFSv2 average skill by 23\% for temperature weeks 3-4, by 39\% for temperature weeks 5-6, by 123\% for precipitation weeks 3-4, and by 157\% for precipitation weeks 5-6.

\cref{tab:historical_skills} also presents the average skills achieved by a three-component ensemble of \stepwise, \autoknn, and reconstructed debiased CFSv2. 
Guided by \cref{prop:ensemble}, we $\ell_2$-normalize the anomalies of each model before taking an equal-weighted average. This ensemble (denoted by \emph{ens-cfs}) produces higher average skills than the original ensemble in all prediction tasks. The \emph{ens-cfs} ensemble also substantially outperforms debiased CFSv2, with skill improvements of 40\% and 50\% for the temperature tasks and 129\% and 169\% for the precipitation tasks. These results highlight the valuable roles that ML-based models, physics-based models, and principled ensembling can all play in subseasonal forecasting.

\paragraph{Contribution of NMME}
Interestingly, the skill improvements of \autoknn were achieved without any use of physics-based model forecasts. Moreover, a \cref{prop:ensemble} ensemble of just \autoknn and \emph{rec-deb-cfsv2} realizes most of the gains of \emph{ens-cfs} without using NMME. Indeed, this ensemble has mean skills over all years in \cref{tab:historical_skills} of (temp. weeks 3-4: 0.354, temp. weeks 5-6: 0.31, precip. weeks 3-4: 0.162, precip. weeks 5-6: 0.147).

While physics-based model forecasts contribute to \stepwise through the NMME ensemble mean, \texttt{nmme\_wo\_ccsm3\_nasa} alone achieves inferior mean skill 
(temp. weeks 3-4: 0.094, temp. weeks 5-6: 0.116, precip. weeks 3-4: 0.116, precip. weeks 5-6: 0.107) over all years in \cref{tab:historical_skills} than all proposed methods and even the temperature debiased CFSv2 baseline.
One contributing factor to this performance is the mismatch between the monthly granularity of the publicly-available NMME forecasts and the biweekly granularity of our forecast periods. 
As a result, we anticipate that more granular NMME data would lead to significant improvements in the final \stepwise model.

\begin{table*}[ht]
\caption{Average skills for historical forecasts in each year following the climatology period (see \cref{sec:historical}). We compare the proposed models \stepwise and \autoknn, the proposed ensemble of \stepwise and \autoknn (\emph{ensemble}), the reconstructed debiased CFSv2 baseline (\emph{rec-deb-cfs}), and the proposed ensemble of \stepwise, \autoknn, and debiased CFSv2 (\emph{ens-cfs}).}
\label{tab:historical_skills}
\begin{tabular}{c|ccccc|ccccc}
\toprule
& \multicolumn{5}{c|}{temperature, weeks 3-4} & \multicolumn{5}{c}{temperature, weeks 5-6} \\
\midrule
year &  multillr & autoknn & ensemble & rec-deb-cfs & ens-cfs &  multillr & autoknn & ensemble & rec-deb-cfs & ens-cfs \\
\midrule
2011 & 0.2695 & 0.3664 & 0.3525 & \bf{0.4598} & 0.4589 & 0.2522 & 0.3240 & 0.3537 & 0.3879 & \bf{0.4284} \\
2012 & 0.1466 & \bf{0.3135} & 0.2548 & 0.1397 & 0.2505 & 0.2313 & \bf{0.3205} & 0.3193 & 0.1030 & 0.3033 \\
2013 & 0.1031 & 0.2011 & 0.1852 & 0.2861 & \bf{0.2878} & \bf{0.2212} & 0.0531 & 0.1833 & 0.1211 & 0.1828 \\
2014 & 0.1973 & 0.2775 & 0.2935 & 0.3018 & \bf{0.3547} & 0.1585 & 0.3056 & 0.2643 & 0.1936 & \bf{0.3297} \\
2015 & 0.3513 & 0.3885 & 0.4269 & 0.2857 & \bf{0.4404} & 0.2694 & 0.3939 & 0.3752 & 0.4234 & \bf{0.4426} \\
2016 & 0.2654 & 0.3502 & 0.3467 & 0.2490 & \bf{0.3839} & 0.2213 & 0.2882 & \bf{0.2933} & 0.0983 & 0.2720 \\
2017 & 0.3079 & 0.2807 & \bf{0.3451} & 0.0676 & 0.3253 & 0.2562 & 0.2817 & \bf{0.3025} & 0.1708 & 0.3003 \\
\hline
all & 0.2344 & 0.3111 & 0.3150 & 0.2557 & \bf{0.3573} & 0.2300 & 0.2810 & 0.2988 & 0.2142 & \bf{0.3221} \\
\bottomrule
\end{tabular}
\begin{tabular}{c|ccccc|ccccc}
\toprule
& \multicolumn{5}{c|}{precipitation, weeks 3-4} & \multicolumn{5}{c}{precipitation, weeks 5-6} \\
\midrule
year &  multillr & autoknn & ensemble & rec-deb-cfs & ens-cfs &  multillr & autoknn & ensemble & rec-deb-cfs & ens-cfs \\
\midrule
2011 & 0.1817 & 0.2173 & 0.2420 & 0.1646 & \bf{0.2692} & 0.1398 & 0.2132 & 0.2210 & 0.1835 & \bf{0.2666} \\
2012 & 0.3147 & 0.3648 & \bf{0.3983} & 0.0828 & 0.3909 & 0.3039 & 0.3943 & 0.4002 & 0.1941 & \bf{0.4224} \\
2013 & 0.1552 & 0.2026 & \bf{0.2130} & 0.0648 & 0.1711 & 0.1392 & 0.1784 & \bf{0.2031} & 0.0782 & 0.1939 \\
2014 & 0.0790 & 0.1208 & 0.1391 & 0.1272 & \bf{0.1738} & \!\!-0.0069 & \bf{0.0818} & 0.0556 & 0.0155 & 0.0782 \\
2015 & 0.0645 & \!\!-0.0053 & 0.0532 & 0.0837 & \bf{0.1043} & 0.0802 & 0.0204 & 0.0755 & 0.0292 & \bf{0.0959} \\
2016 & \bf{0.1419} & \!\!-0.0568 & 0.0636 & 0.0190 & 0.0435 & \bf{0.1703} & \!\!-0.0930 & 0.0569 & \!\!-0.0160 & 0.0483 \\
2017 & 0.1597 & 0.2156 & \bf{0.2364} & 0.0596 & 0.2250 & 0.1876 & 0.1870 & \bf{0.2315} & \!\!-0.0038 & 0.1978 \\
\hline
all & 0.1567 & 0.1513 & 0.1922 & 0.0860 & \bf{0.1968} & 0.1449 & 0.1403 & 0.1777 & 0.0691 & \bf{0.1857} \\
\bottomrule
\end{tabular}
\end{table*}

\subsection{Exploring \stepwise} \label{sec:stepwise_explore}
\cref{fig:stepwise_features} shows the frequency with which each candidate feature was selected by \stepwise in the four prediction tasks, across all target dates in the historical evaluation period. For all four tasks, the most frequently selected features include pressure (\texttt{pres}), the intercept term (\texttt{ones}), and temperature (\texttt{tmp2m}). The NMME ensemble average (\texttt{nmme\_wo\_ccsm3\_nasa}) is the first or second most commonly selected feature for predicting precipitation, but its relative selection frequency is much lower for temperature.

Although we used a slightly larger set of candidate features for the precipitation tasks---23 for precipitation, compared to 20 for temperature---the selected models are more parsimonious for precipitation than for temperature. 
The median number of selected features for predicting temperature is 7 for both forecasting horizons, while the median number of selected features for predicting precipitation is 4 for weeks 3-4 and 5 for weeks 5-6.

\subsection{Exploring \autoknn}

\cref{fig:knn_circos_precip-34w} plots the month distribution of the top nearest neighbor learned by \autoknn for predicting precipitation, weeks 3-4, as a function of the month of the target date. The figure shows that when predicting precipitation, the top neighbor for a target date is generally from the same time of year as the target date: for summer target dates, the top neighbor tends to be from a summer month and similarly for winter target dates. The corresponding plot for temperature (omitted due to space constraints) shows that this pattern does not hold when predicting temperature; rather, the top neighbors are drawn from throughout the year, regardless of the month of the target date.

The matrix plots in \cref{fig:knn_matrix_tmp2m} show the year and month of the top 20 nearest neighbors for predicting temperature, weeks 3-4, as a function of the target date. In each plot, the vertical axis ranges from $k=1$ (most similar neighbor) to $k=20$ (20th most similar neighbor). The vertical striations in both plots indicate that the top 20 neighbors for a given target date tend to be homogeneous in terms of both month and year: neighbors tend to come from the same or adjacent years and times of year. Moreover. the neighbors for post-2015 target dates tend to be from post-2010 years, in keeping with recent years' record high temperatures. The corresponding plots for precipitation (omitted due to space constraints) show that the top neighbors for precipitation do not disproportionately come from recent years, and the months of the top neighbors follow a regular seasonal pattern, consistent with \cref{fig:knn_circos_precip-34w}.

\begin{figure*}[h!]
\begin{subfigure}[b]{0.478\textwidth}
\centering
\includegraphics[width=\linewidth]{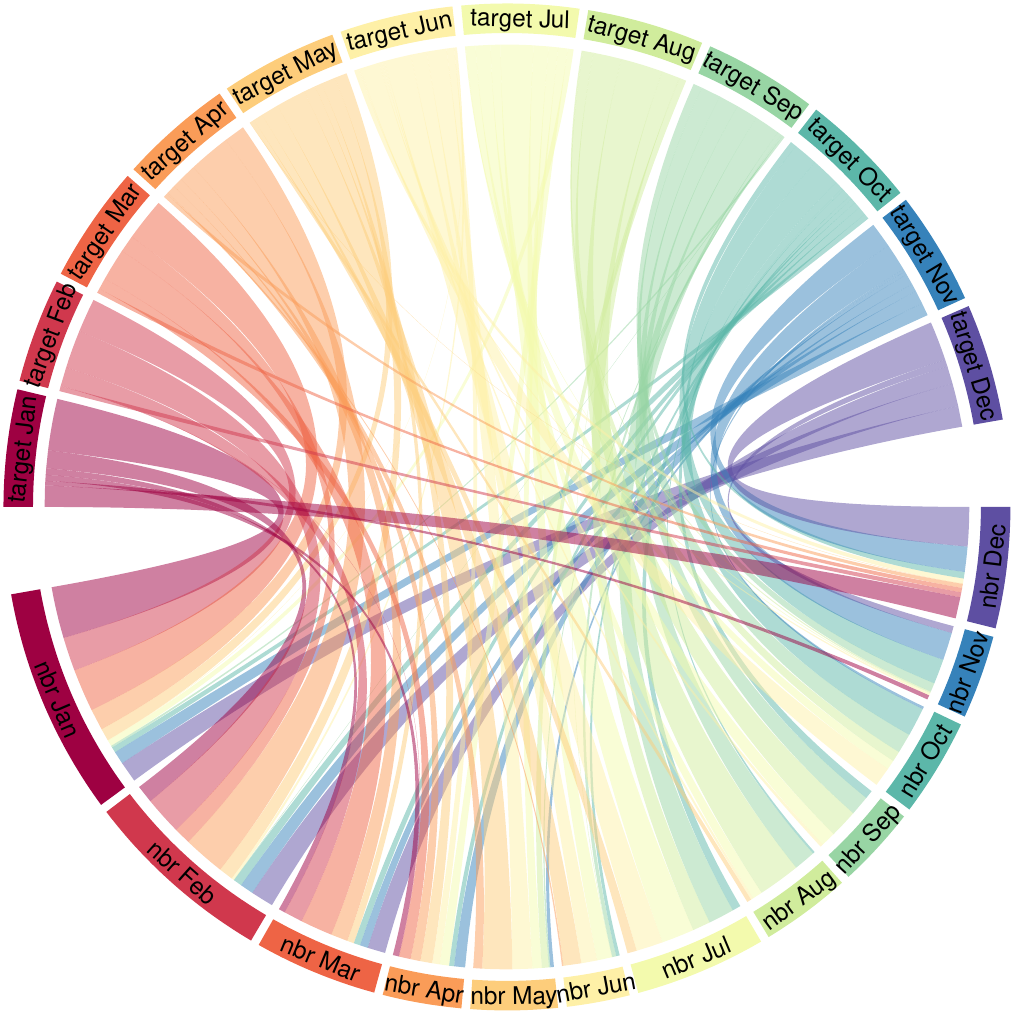}
\caption{}\label{fig:knn_circos_precip-34w}
\end{subfigure}%
~
\begin{subfigure}[b]{0.478\textwidth}
\centering
\includegraphics[width=\linewidth]{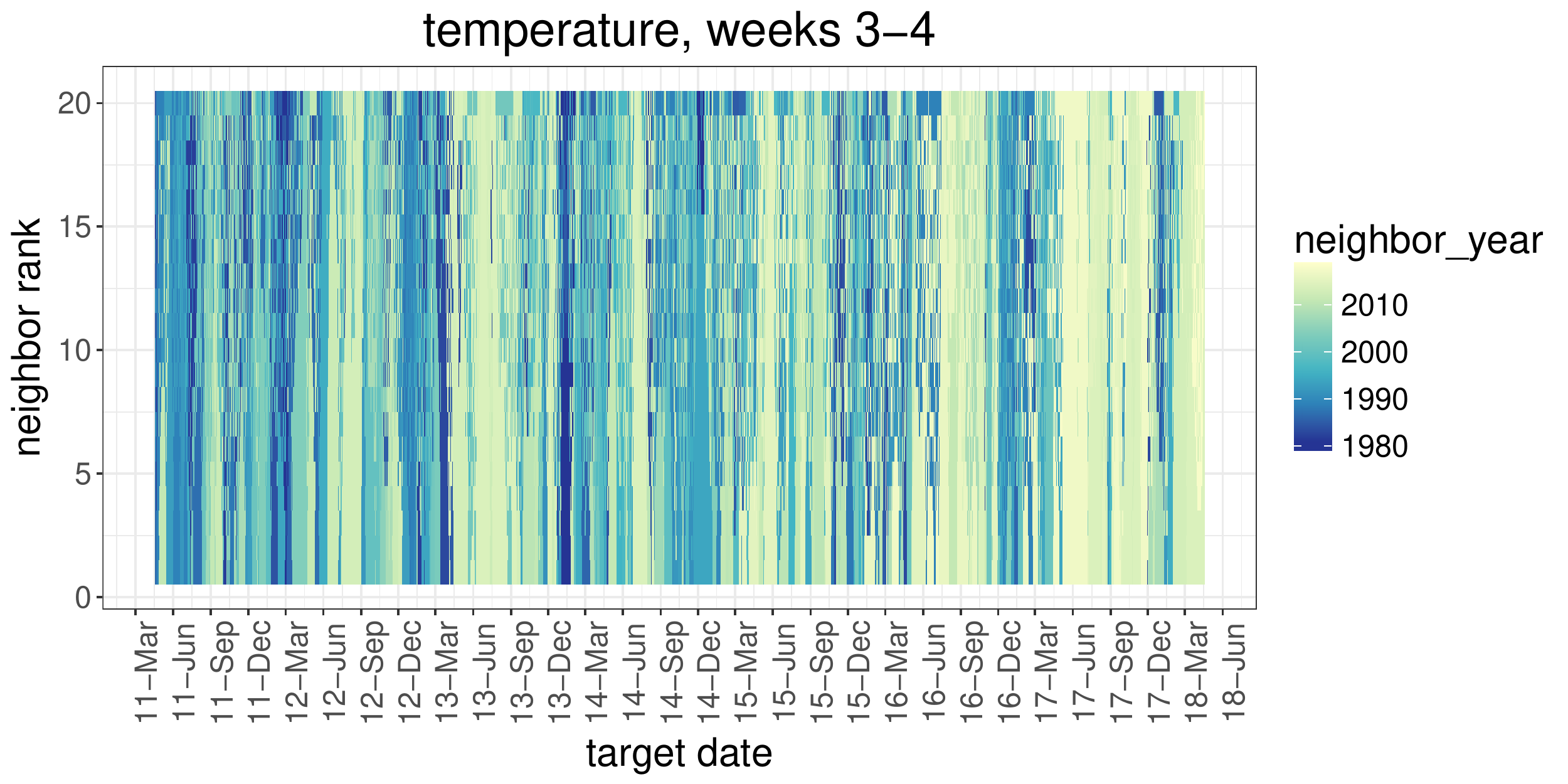}
\includegraphics[width=\linewidth]{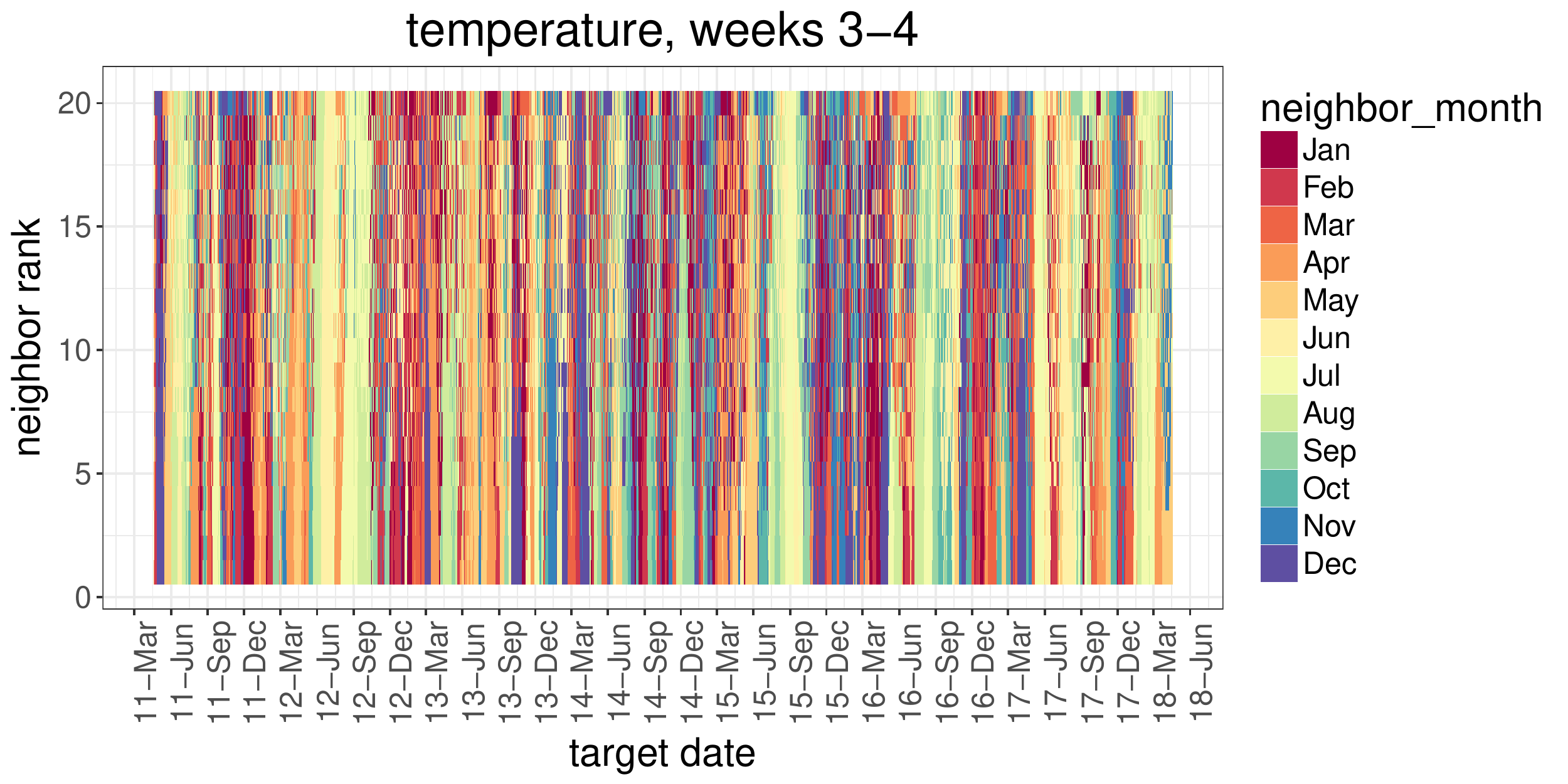}
\caption{}\label{fig:knn_matrix_tmp2m}
\end{subfigure}
\caption{(a) Precipitation, weeks 3-4: Distribution of the month of the most similar neighbor learned by \autoknn as a function of the month of the target date. (b) Temperature, weeks 3-4: Year (top) and month (bottom) of the 20 most similar neighbors learned by \autoknn (vertical axis ranges from $k=1$ to $20$) as a function of the target date (horizontal axis).}
\label{fig:knn_plots}
\end{figure*}

\section{Discussion}

To meet the USBR's Subseasonal Climate Forecast Rodeo challenge,
we developed an ML-based forecasting system and demonstrated 40-169\% improvements in forecasting skill across the challenge period (2017-18) and the years 2011-18 more generally.
Notably, the same procedures provide these improvements for each of the four Rodeo prediction tasks (forecasting temperature or precipitation at weeks 3-4 or weeks 5-6).
In the short term, we anticipate that these improvements will
benefit disaster management (e.g., anticipating droughts, floods, and other wet weather extremes) and the water management, development, and protection operations of the USBR more generally (e.g., providing irrigation water to 20\% of western U.S. farmers and generating hydroelectricity for 3.5 million homes).
In the longer term, we hope that these tools will improve our ability to anticipate and manage wildfires \citep{white2017}.

Our experience also suggests that subseasonal forecasting is fertile ground for machine learning development.
Much of the methodological novelty in our approach was driven by the unusual multitask forecasting skill objective. This objective inspired our new and provably beneficial ensembling approach and our custom multitask neighbor selection strategy.
We hope that introducing this problem to the ML community and providing a user-friendly benchmark dataset will stimulate the development and evaluation of additional subseasonal forecasting approaches.

\section*{Acknowledgments}
We thank the Subseasonal Climate Forecast Rodeo organizers for administering this challenge
and Ernest Fraenkel for bringing our team together.
JC is supported by the National Science Foundation grant AGS-1303647.

\bibliographystyle{ACM-Reference-Format}
\bibliography{refs}

\newpage\onecolumn
\appendix
\section{Supplementary \Dataset Details} \label{sec:data_supp}

The \dataset is organized as a collection of Python Pandas DataFrames and Series objects \citep{mckinney2010data} stored in HDF5 format (via \ttt{pandas.DataFrame.to\_hdf} or \ttt{pandas.Series.to\_hdf}), with one \ttt{.h5} file per DataFrame or Series.
The contents of any file can be loaded in Python using \ttt{pandas.read\_hdf}.
Each DataFrame or Series contributes data variables (features or target values) falling into one of three categories: (i) spatial (varying with the target grid point but not the target date); (ii) temporal (varying with the target date but not the target grid point); (iii) spatiotemporal (varying with both the target grid point and the target date). 
Unless otherwise noted in \cref{sec:data} or below, temporal and spatiotemporal variables arising from daily data sources were derived by averaging input values over each 14-day period, and spatial and spatiotemporal variables were derived by interpolating input data to a $1^\circ \times 1^\circ$ grid using the Climate Data Operators (CDO version 1.8.2) operator \ttt{remapdis} (distance-weighted average interpolation) with target grid \ttt{r360x181} and retaining only the contest grid points.
In addition to the variables described in \cref{sec:data}, a number of auxiliary variables were downloaded and processed but not ultimately used in our approach.
\subsection{Temperature and Precipitation Interpolation}
The downloaded temperature variables \ttt{tmin} and \ttt{tmax}, global precipitation variable \ttt{rain}, and U.S.\ precipitation variable \ttt{precip} were each interpolated to a fixed $1^\circ\times 1^\circ$ grid using the NCAR Command Language (NCL version 6.0.0) function \ttt{area\_hi2lores\_Wrap} with arguments \ttt{new\_lat  = latGlobeF(181, "lat", "latitude", "degrees\_north")}; \ttt{new\_lon = lonGlobeF(360, "lon", "longitude", "degrees\_east")}; \ttt{wgt = cos(lat*pi/180.0)} (so that points are weighted by the cosine of the latitude in radians); \ttt{opt@critpc = 50} (to require only 50\% of the values to be present to interpolate); and \ttt{fiCyclic = True} (indicating global data with longitude values that do not quite wrap around the globe). \ttt{rain} was then renamed to \ttt{precip}.

\subsection{Data Sources}
The \dataset data were downloaded from the following sources.
\bitemize
\item Temperature \citep{fan2008}: \url{ftp://ftp.cpc.ncep.noaa.gov/precip/PEOPLE/wd52ws/global_temp/}
\item Global precipitation \citep{xie2010cpc}: \url{ftp://ftp.cpc.ncep.noaa.gov/precip/CPC_UNI_PRCP/GAUGE_GLB/RT/}
\item \US precipitation \citep{xie2010cpc}: \url{https://www.esrl.noaa.gov/psd/thredds/catalog/Datasets/cpc_us_precip/catalog.html}
\item Sea surface temperature and sea ice concentration \citep{reynolds2007daily}: \url{ftp://ftp.cdc.noaa.gov/Projects/Datasets/noaa.oisst.v2.highres/}
\item Multivariate ENSO index (MEI) \citep{wolter1993monitoring,wolter1998measuring,zimmerman2016nip}: \url{https://www.esrl.noaa.gov/psd/enso/mei/}
\item Madden-Julian oscillation (MJO) \citep{wheeler2004all}: \url{http://www.bom.gov.au/climate/mjo/graphics/rmm.74toRealtime.txt}
\item Relative humidity, sea level pressure, and precipitable water for entire atmosphere \citep{kalnay1996ncep}: \url{ftp://ftp.cdc.noaa.gov/Datasets/ncep.reanalysis/surface/}
\item Pressure and potential evaporation \citep{kalnay1996ncep}: \url{ftp://ftp.cdc.noaa.gov/Datasets/ncep.reanalysis/surface_gauss/}
\item Geopotential height, zonal wind, and longitudinal wind \citep{kalnay1996ncep}: \url{ftp://ftp.cdc.noaa.gov/Datasets/ncep.reanalysis.dailyavgs/pressure/}
\item North American Multi-Model Ensemble (NMME) \citep{kirtman2014north}: \url{https://iridl.ldeo.columbia.edu/SOURCES/.Models/.NMME/}
\item Elevation \citep{gates1975new}: \url{http://research.jisao.washington.edu/data_sets/elevation/elev.1-deg.nc}
\item K\"oppen-Geiger climate classifications \citep{kottek2006world}: \url{http://koeppen-geiger.vu-wien.ac.at/present.htm}
 \eitemize

\subsection{Dataset Files}
Below, we list the contents of each \dataset file.
Each file with the designation `Series' contains a Pandas Series object with a MultiIndex for the target latitude (\ttt{lat}), longitude (\ttt{lon}), and date defining the start of the target two-week period (\ttt{start\_date}).
Each file with the designation `MultiIndex DataFrame' contains a Pandas DataFrame object with a MultiIndex for \ttt{lat}, \ttt{lon}, and \ttt{start\_date}.
Each file with a filename beginning with `nmme' contains a Pandas DataFrame object with \ttt{target\_start}, \ttt{lat}, and \ttt{lon} columns; the \ttt{target\_start} column plays the same role as \ttt{start\_date} in other files, indicating the date defining the start of the target two-week period.
Each remaining file with the designation `DataFrame' contains a Pandas DataFrame object with \ttt{lat} and \ttt{lon} columns if the contained variables are spatial; a \ttt{start\_date} column if the contained variables are temporal; and \ttt{start\_date}, \ttt{lat}, and \ttt{lon} columns if the contained variables are spatiotemporal.

The filename prefix `gt-wide' indicates that a file contains temporal variables representing a base variable's measurement at multiple locations on a latitude-longitude grid that need not correspond to contest grid point locations. The temporal variable column names are tuples in the format (`\emph{base variable name}', \emph{latitude}, \emph{longitude}). 
The base variable measurements underlying the files with the filename prefix `gt-wide\_contest' were first interpolated to a $1^\circ\times 1^\circ$ grid. The measurements underlying the remaining `gt-wide' files did not undergo interpolation; the original data source grids were instead employed.
\begin{multicols}{2}
{\scriptsize
\bitemize[leftmargin=*]
\item \ttt{gt-climate\_regions.h5} (DataFrame)
\bitemize
\item Spatial variable K\"oppen-Geiger climate classifications (\ttt{climate\_region})
\eitemize
\item \ttt{gt-contest\_pevpr.sfc.gauss-14d-1948-2018.h5} (Series)%
\bitemize
\item Spatiotemporal variable potential evaporation (\ttt{pevpr.sfc.gauss})
\eitemize
\item \ttt{gt-contest\_precip-14d-1948-2018.h5} (Series)%
\bitemize
\item Spatiotemporal variable precipitation (\ttt{precip}) 
\eitemize
\item \ttt{gt-contest\_pres.sfc.gauss-14d-1948-2018.h5} (Series)%
\bitemize
\item Spatiotemporal variable pressure (\ttt{pres.sfc.gauss})
\eitemize
\item \ttt{gt-contest\_pr\_wtr.eatm-14d-1948-2018.h5} (Series)%
\bitemize
\item Spatiotemporal variable precipitable water for entire atmosphere (\ttt{pr\_wtr.eatm})
\eitemize
\item \ttt{gt-contest\_rhum.sig995-14d-1948-2018.h5} (Series)%
\bitemize
\item Spatiotemporal variable relative humidity (\ttt{rhum.sig995})
\eitemize
\item \ttt{gt-contest\_slp-14d-1948-2018.h5} (Series)%
\bitemize
\item Spatiotemporal variable sea level pressure (\ttt{slp})
\eitemize
\item \ttt{gt-contest\_tmax-14d-1979-2018.h5} (Series)%
\bitemize
\item Spatiotemporal variable maximum temperature at 2m (\ttt{tmax})
\eitemize
\item \ttt{gt-contest\_tmin-14d-1979-2018.h5} (Series)%
\bitemize
\item Spatiotemporal variable minimum temperature at 2m (\ttt{tmin})
\eitemize
\item \ttt{gt-contest\_tmp2m-14d-1979-2018.h5} (DataFrame)%
\bitemize
\item Spatiotemporal variables temperature at 2m (\ttt{tmp2m}), average squared temperature at 2m over two-week period (\ttt{tmp2m\_sqd}), and standard deviation of temperature at 2m over two-week period (\ttt{tmp2m\_std})
\eitemize
\item \ttt{gt-contest\_wind\_hgt\_100-14d-1948-2018.h5} (Series)%
\bitemize
\item Spatiotemporal variable geopotential height at 100 millibars (\texttt{contest\_wind\_hgt\_100}) 
\eitemize
\item \ttt{gt-contest\_wind\_hgt\_10-14d-1948-2018.h5} (Series)%
\bitemize
\item Spatiotemporal variable geopotential height at 10 millibars (\texttt{contest\_wind\_hgt\_10}) 
\eitemize
\item \ttt{gt-contest\_wind\_hgt\_500-14d-1948-2018.h5} (Series)%
\bitemize
\item Spatiotemporal variable geopotential height at 500 millibars (\texttt{contest\_wind\_hgt\_500}) 
\eitemize
\item \ttt{gt-contest\_wind\_hgt\_850-14d-1948-2018.h5} (Series)%
\bitemize
\item Spatiotemporal variable geopotential height at 850 millibars (\texttt{contest\_wind\_hgt\_850}) 
\eitemize
\item \ttt{gt-contest\_wind\_uwnd\_250-14d-1948-2018.h5} (Series)%
\bitemize
\item Spatiotemporal variable zonal wind at 250 millibars (\texttt{contest\_wind\_uwnd\_250}) 
\eitemize
\item \ttt{gt-contest\_wind\_uwnd\_925-14d-1948-2018.h5} (Series)%
\bitemize
\item Spatiotemporal variable zonal wind at 925 millibars (\texttt{contest\_wind\_uwnd\_925})
\eitemize
\item \ttt{gt-contest\_wind\_vwnd\_250-14d-1948-2018.h5} (Series)%
\bitemize
\item Spatiotemporal variable longitudinal wind at 250 millibars (\texttt{contest\_wind\_vwnd\_250})
\eitemize
\item \ttt{gt-contest\_wind\_vwnd\_925-14d-1948-2018.h5} (Series)%
\bitemize
\item Spatiotemporal variable longitudinal wind at 925 millibars (\texttt{contest\_wind\_vwnd\_925}) 
\eitemize
\item \ttt{gt-elevation.h5} (DataFrame)
\bitemize
\item Spatial variable elevation (\ttt{elevation})
\eitemize
\item \ttt{gt-mei-1950-2018.h5} (DataFrame)%
\bitemize
\item Temporal variables MEI (\ttt{mei}), MEI rank (\ttt{rank}), and Ni{\~n}o Index Phase (\ttt{nip}) derived from \ttt{mei} and \ttt{rank} using the definition in \citep{zimmerman2016nip}
\eitemize
\item \ttt{gt-mjo-1d-1974-2018.h5} (DataFrame)%
\bitemize
\item Temporal variables MJO phase (\ttt{phase}) and amplitude (\ttt{amplitude})
\eitemize
\item \ttt{gt-pca\_icec\_2010-14d-1981-2018.h5} (DataFrame)%
\bitemize
\item Temporal variables top PCs of \ttt{gt-wide\_contest\_icec-14d-1981-2018.h5} based on PC loadings from 1981-2010 %
\eitemize
\item \ttt{gt-pca\_sst\_2010-14d-1981-2018.h5} (DataFrame)%
\bitemize
\item Temporal variables top PCs of \ttt{gt-wide\_contest\_sst-14d-1981-2018.h5} based on PC loadings from 1981-2010 %
\eitemize
\item \ttt{gt-pca\_wind\_hgt\_100\_2010-14d-1948-2018.h5} (DataFrame)%
\bitemize
\item Temporal variables top PCs of \ttt{gt-wide\_wind\_hgt\_100-14d-1948-2018.h5} based on PC loadings from 1948-2010 %
\eitemize
\item \ttt{gt-pca\_wind\_hgt\_10\_2010-14d-1948-2018.h5} (DataFrame)%
\bitemize
\item Temporal variables top PCs of \ttt{gt-wide\_wind\_hgt\_10-14d-1948-2018.h5} based on PC loadings from 1948-2010 %
\eitemize
\item \ttt{gt-pca\_wind\_hgt\_500\_2010-14d-1948-2018.h5} (DataFrame)%
\bitemize
\item Temporal variables top PCs of \ttt{gt-wide\_wind\_hgt\_500-14d-1948-2018.h5} based on PC loadings from 1948-2010 %
\eitemize
\item \ttt{gt-pca\_wind\_hgt\_850\_2010-14d-1948-2018.h5} (DataFrame)%
\bitemize
\item Temporal variables top PCs of \ttt{gt-wide\_wind\_hgt\_850-14d-1948-2018.h5} based on PC loadings from 1948-2010 %
\eitemize
\item \ttt{gt-pca\_wind\_uwnd\_250\_2010-14d-1948-2018.h5} (DataFrame)%
\bitemize
\item Temporal variables top PCs of \ttt{gt-wide\_wind\_uwnd\_250-14d-1948-2018.h5} based on PC loadings from 1948-2010 %
\eitemize
\item \ttt{gt-pca\_wind\_uwnd\_925\_2010-14d-1948-2018.h5} (DataFrame)%
\bitemize
\item Temporal variables top PCs of \ttt{gt-wide\_wind\_uwnd\_925-14d-1948-2018.h5} based on PC loadings from 1948-2010 %
\eitemize
\item \ttt{gt-pca\_wind\_vwnd\_250\_2010-14d-1948-2018.h5} (DataFrame)%
\bitemize
\item Temporal variables top PCs of \ttt{gt-wide\_wind\_vwnd\_250-14d-1948-2018.h5} based on PC loadings from 1948-2010 %
\eitemize
\item \ttt{gt-pca\_wind\_vwnd\_925\_2010-14d-1948-2018.h5} (DataFrame)%
\bitemize
\item Temporal variables top PCs of \ttt{gt-wide\_wind\_vwnd\_925-14d-1948-2018.h5} based on PC loadings from 1948-2010 %
\eitemize
\item \ttt{gt-wide\_contest\_icec-14d-1981-2018.h5} (DataFrame)%
\bitemize
\item Temporal variables sea ice concentration for all grid points in the Pacific basin (20S to 65N, 150E to 90W) ((`{icec}',\emph{latitude},\emph{longitude}))
\eitemize
\item \ttt{gt-wide\_contest\_sst-14d-1981-2018.h5} (DataFrame)%
\bitemize
\item Temporal variables sea surface temperature for all grid points in the Pacific basin (20S to 65N, 150E to 90W) ((`{sst}',\emph{latitude},\emph{longitude})) 
\eitemize
\item \ttt{gt-wide\_wind\_hgt\_100-14d-1948-2018.h5} (DataFrame)%
\bitemize
\item Temporal variables geopotential height at 100 millibars for all grid points globally ((`{wind\_hgt\_100}',\emph{latitude},\emph{longitude})) 
\eitemize
\item \ttt{gt-wide\_wind\_hgt\_10-14d-1948-2018.h5} (DataFrame)%
\bitemize
\item Temporal variables geopotential height at 10 millibars for all grid points globally ((`{wind\_hgt\_10}',\emph{latitude},\emph{longitude})) 
\eitemize
\item \ttt{gt-wide\_wind\_hgt\_500-14d-1948-2018.h5} (DataFrame)%
\bitemize
\item Temporal variables geopotential height at 500 millibars for all grid points globally ((`{wind\_hgt\_500}',\emph{latitude},\emph{longitude})) 
\eitemize
\item \ttt{gt-wide\_wind\_hgt\_850-14d-1948-2018.h5} (DataFrame)%
\bitemize
\item Temporal variables geopotential height at 850 millibars for all grid points globally ((`{wind\_hgt\_850}',\emph{latitude},\emph{longitude})) 
\eitemize
\item \ttt{gt-wide\_wind\_uwnd\_250-14d-1948-2018.h5} (DataFrame)%
\bitemize
\item Temporal variables zonal wind at 250 millibars for all grid points globally ((`{wind\_uwnd\_250}',\emph{latitude},\emph{longitude})) 
\eitemize
\item \ttt{gt-wide\_wind\_uwnd\_925-14d-1948-2018.h5} (DataFrame)%
\bitemize
\item Temporal variables zonal wind at 925 millibars for all grid points globally ((`{wind\_uwnd\_925}',\emph{latitude},\emph{longitude})) 
\eitemize
\item \ttt{gt-wide\_wind\_vwnd\_250-14d-1948-2018.h5} (DataFrame)%
\bitemize
\item Temporal variables longitudinal wind at 250 millibars for all grid points globally ((`{wind\_vwnd\_250}',\emph{latitude},\emph{longitude})) 
\eitemize
\item \ttt{gt-wide\_wind\_vwnd\_925-14d-1948-2018.h5} (DataFrame)%
\bitemize
\item Temporal variables longitudinal wind at 925 millibars for all grid points globally ((`{wind\_vwnd\_925}',\emph{latitude},\emph{longitude})) 
\eitemize
\item \ttt{nmme0-prate-34w-1982-2018.h5} (DataFrame)%
\bitemize
\item Spatiotemporal variables most recent monthly NMME model forecasts for \ttt{precip} (\ttt{cancm3\_0}, \ttt{cancm4\_0}, \ttt{ccsm3\_0}, \ttt{ccsm4\_0}, \ttt{cfsv2\_0}, \ttt{gfdl-flor-a\_0}, \ttt{gfdl-flor-b\_0}, \ttt{gfdl\_0}, '\ttt{nasa\_0}, '\ttt{nmme0\_mean}) and average forecast across those models (\ttt{nmme0\_mean})
\eitemize
\item \ttt{nmme0-prate-56w-1982-2018.h5} (DataFrame)%
\bitemize
\item Spatiotemporal variables most recent monthly NMME model forecasts for \ttt{precip} (\ttt{cancm3\_0}, \ttt{cancm4\_0}, \ttt{ccsm3\_0}, \ttt{ccsm4\_0}, \ttt{cfsv2\_0}, \ttt{gfdl-flor-a\_0}, \ttt{gfdl-flor-b\_0}, \ttt{gfdl\_0}, '\ttt{nasa\_0}, '\ttt{nmme0\_mean}) and average forecast across those models (\ttt{nmme0\_mean})
\eitemize
\item \ttt{nmme0-tmp2m-34w-1982-2018.h5} (DataFrame)%
\bitemize
\item Spatiotemporal variables most recent monthly NMME model forecasts for \ttt{tmp2m} (\ttt{cancm3\_0}, \ttt{cancm4\_0}, \ttt{ccsm3\_0}, \ttt{ccsm4\_0}, \ttt{cfsv2\_0}, \ttt{gfdl-flor-a\_0}, \ttt{gfdl-flor-b\_0}, \ttt{gfdl\_0}, '\ttt{nasa\_0}, '\ttt{nmme0\_mean}) and average forecast across those models (\ttt{nmme0\_mean})
\eitemize
\item \ttt{nmme0-tmp2m-56w-1982-2018.h5} (DataFrame)%
\bitemize
\item Spatiotemporal variables most recent monthly NMME model forecasts for \ttt{tmp2m} (\ttt{cancm3\_0}, \ttt{cancm4\_0}, \ttt{ccsm3\_0}, \ttt{ccsm4\_0}, \ttt{cfsv2\_0}, \ttt{gfdl-flor-a\_0}, \ttt{gfdl-flor-b\_0}, \ttt{gfdl\_0}, '\ttt{nasa\_0}, '\ttt{nmme0\_mean}) and average forecast across those models (\ttt{nmme0\_mean})
\eitemize
\item \ttt{nmme-prate-34w-1982-2018.h5} (DataFrame)%
\bitemize
\item Spatiotemporal variables weeks 3-4 weighted average of monthly NMME model forecasts for \ttt{precip} (\ttt{cancm3}, \ttt{cancm4}, \ttt{ccsm3}, \ttt{ccsm4}, \ttt{cfsv2}, \ttt{gfdl}, \ttt{gfdl-flor-a}, \ttt{gfdl-flor-b}, \ttt{nasa}) and average forecast across those models (\ttt{nmme\_mean})
\eitemize
\item \ttt{nmme-prate-56w-1982-2018.h5} (DataFrame)%
\bitemize
\item Spatiotemporal variables weeks 5-6 weighted average of monthly NMME model forecasts for \ttt{precip} (\ttt{cancm3}, \ttt{cancm4}, \ttt{ccsm3}, \ttt{ccsm4}, \ttt{cfsv2}, \ttt{gfdl}, \ttt{gfdl-flor-a}, \ttt{gfdl-flor-b}, \ttt{nasa}) and average forecast across those models (\ttt{nmme\_mean})
\eitemize
\item \ttt{nmme-tmp2m-34w-1982-2018.h5} (DataFrame)%
\bitemize
\item Spatiotemporal variables weeks 3-4 weighted average of monthly NMME model forecasts for \ttt{tmp2m} (\ttt{cancm3}, \ttt{cancm4}, \ttt{ccsm3}, \ttt{ccsm4}, \ttt{cfsv2}, \ttt{gfdl}, \ttt{gfdl-flor-a}, \ttt{gfdl-flor-b}, \ttt{nasa}) and average forecast across those models (\ttt{nmme\_mean})
\eitemize
\item \ttt{nmme-tmp2m-56w-1982-2018.h5} (DataFrame)%
\bitemize
\item Spatiotemporal variables weeks 5-6 weighted average of monthly NMME model forecasts for \ttt{tmp2m} (\ttt{cancm3}, \ttt{cancm4}, \ttt{ccsm3}, \ttt{ccsm4}, \ttt{cfsv2}, \ttt{gfdl}, \ttt{gfdl-flor-a}, \ttt{gfdl-flor-b}, \ttt{nasa}) and average forecast across those models (\ttt{nmme\_mean})
\eitemize
\item \ttt{official\_climatology-contest\_precip-1981-2010.h5} (DataFrame)%
\bitemize
\item Spatiotemporal variable precipitation climatology (\ttt{precip\_clim}). Only the dates 1799-12-19--1800-12-18 are included as representatives of each (non-leap day) month-day combination.
\eitemize
\item \ttt{official\_climatology-contest\_tmp2m-1981-2010.h5} (DataFrame)%
\bitemize
\item Spatiotemporal variable temperature at 2 meters climatology (\ttt{tmp2m\_clim}). Only the dates 1799-12-19--1800-12-18 are included as representatives of each (non-leap day) month-day combination. 
\eitemize
\eitemize
}
\end{multicols}
\section{Debiased CFSv2 Reconstruction Details}
\label{sec:cfs_supp}

For the target dates in the 2011-2018 historical forecast evaluation period of \cref{sec:historical}, Climate Forecast System (CFSv2) archived operational forecasts were retrieved from the National Center for Environmental Information (NCEI) site at \url{https://nomads.ncdc.noaa.gov/modeldata/cfsv2_forecast_ts_9mon/}. The Gaussian gridded data (approximately 0.93$^\circ$ resolution) for precipitation rate and 2-meter temperature were interpolated to the Rodeo forecast grid at 1$^\circ$ resolution. These data were then extracted as a window from 25 to 50 N and -125 to -93 W.
Data were extracted for all forecast issue dates, for each cardinal hour (00, 06, 12, and 18 UTC). Interpolation from Gaussian grid to regular $1^\circ \times 1^\circ$ latitude longitude grids was accomplished using a bilinear interpolation under the Python Basemap package \citep{hunter2007matplotlib}. Any missing data were replaced by the average measurement from the available forecasts in the 2-week period.

To obtain a suitable long-term average for debiasing our reconstructed CFSv2 forecasts, precipitation (prate\_f) and temperature (tmp2m\_f) CFS Reforecast data from 1999-2010 were downloaded from \url{https://nomads.ncdc.noaa.gov/data/cfsr-hpr-ts45/}; interpolated to a $1^\circ \times 1^\circ$ grid via bilinear interpolation using wgrib2 (v0.2.0.6c) with arguments \ttt{-new\_grid\_winds earth} and \ttt{-new\_grid ncep grid 3}; and then restricted to the contest region. Temperatures were converted from Kelvin to Celsius, and the precipitation measurements were scaled from mm/s to mm/2-week period. Finally, each 2-week period in the data was averaged (for temperature) or summed (for precipitation). Any missing data were replaced by the average measurement from the available forecasts in the 2-week period.

\clearpage
\opt{arxiv}{ \section{Supplementary Figures}
\label{sec:supplementary_figures}
\begin{figure*}[h!]
\includegraphics[width=0.45\textwidth]{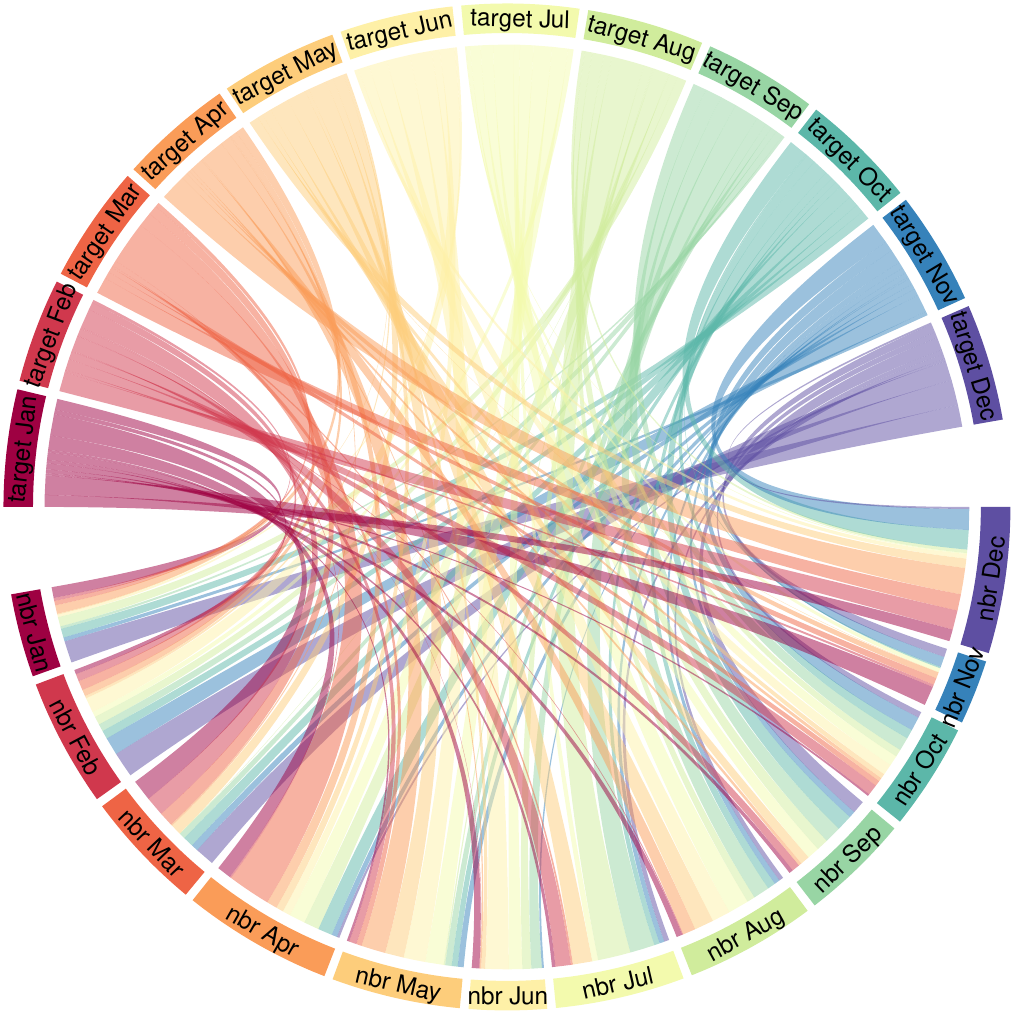}
\includegraphics[width=0.45\textwidth]{knn_circos_plots/circos_precip-34w_maxnbrs1_all_dates-2011-2017.pdf}
\caption{Distribution of the month of the most similar neighbor learned by \autoknn as a function of the month of the target date to be predicted. Left: Most similar neighbor for temperature, weeks 3-4. Right: Most similar neighbor for precipitation, weeks 3-4. The plots for weeks 5-6 are very similar. For temperature, the most similar neighbor can come from any time of year, regardless of the month of the target date. For precipitation, we instead observe a strong seasonal pattern; the season of the most similar neighbor generally matches the season of the target date.}
\label{fig:knn_circos_plot}
\end{figure*}

\begin{figure*}[h!]
\centering
\includegraphics[width=\textwidth]{knn_matrix_plots/matplot_tmp2m-34w_maxnbrs20_all_dates-2011-2017.pdf}
\includegraphics[width=\textwidth]{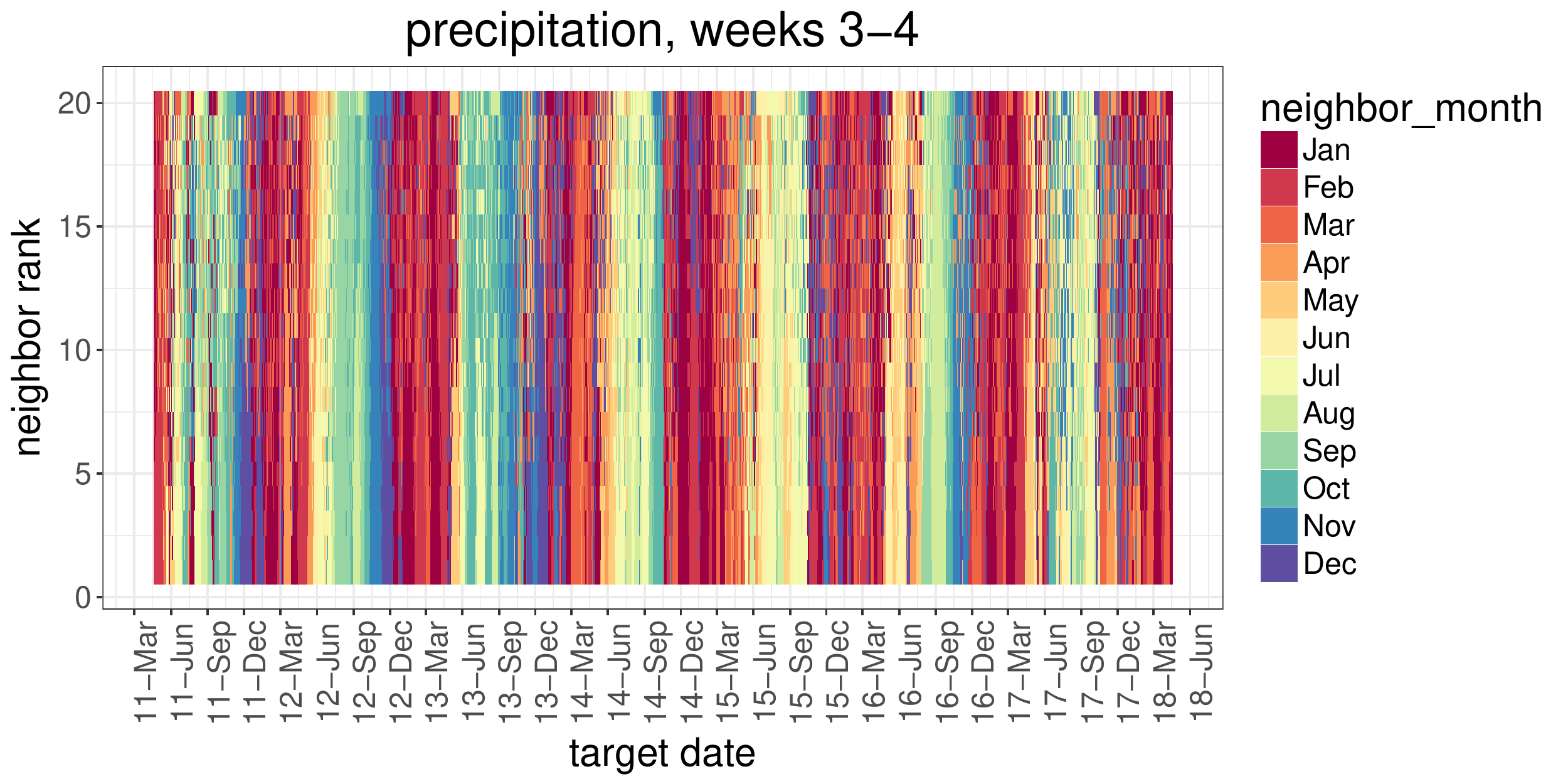}
\caption{Month of the 20 most similar neighbors learned by \autoknn (vertical axis ranges from $k=1$ to $20$) as a function of the target date to be predicted (horizontal axis). The plots for weeks 5-6 are very similar. Both temperature and precipitation neighbors are homogeneous in month for a given target date, but the months of the precipitation neighbors also exhibit a regular seasonal pattern from year to year, while the temperature neighbors do not.}
\label{fig:knn_matrix_plot}
\end{figure*}

\begin{figure*}[h!]
\includegraphics[width=\textwidth]{knn_matrix_plots/year-matplot_tmp2m-34w_maxnbrs20_all_dates-2011-2017.pdf}
\includegraphics[width=\textwidth]{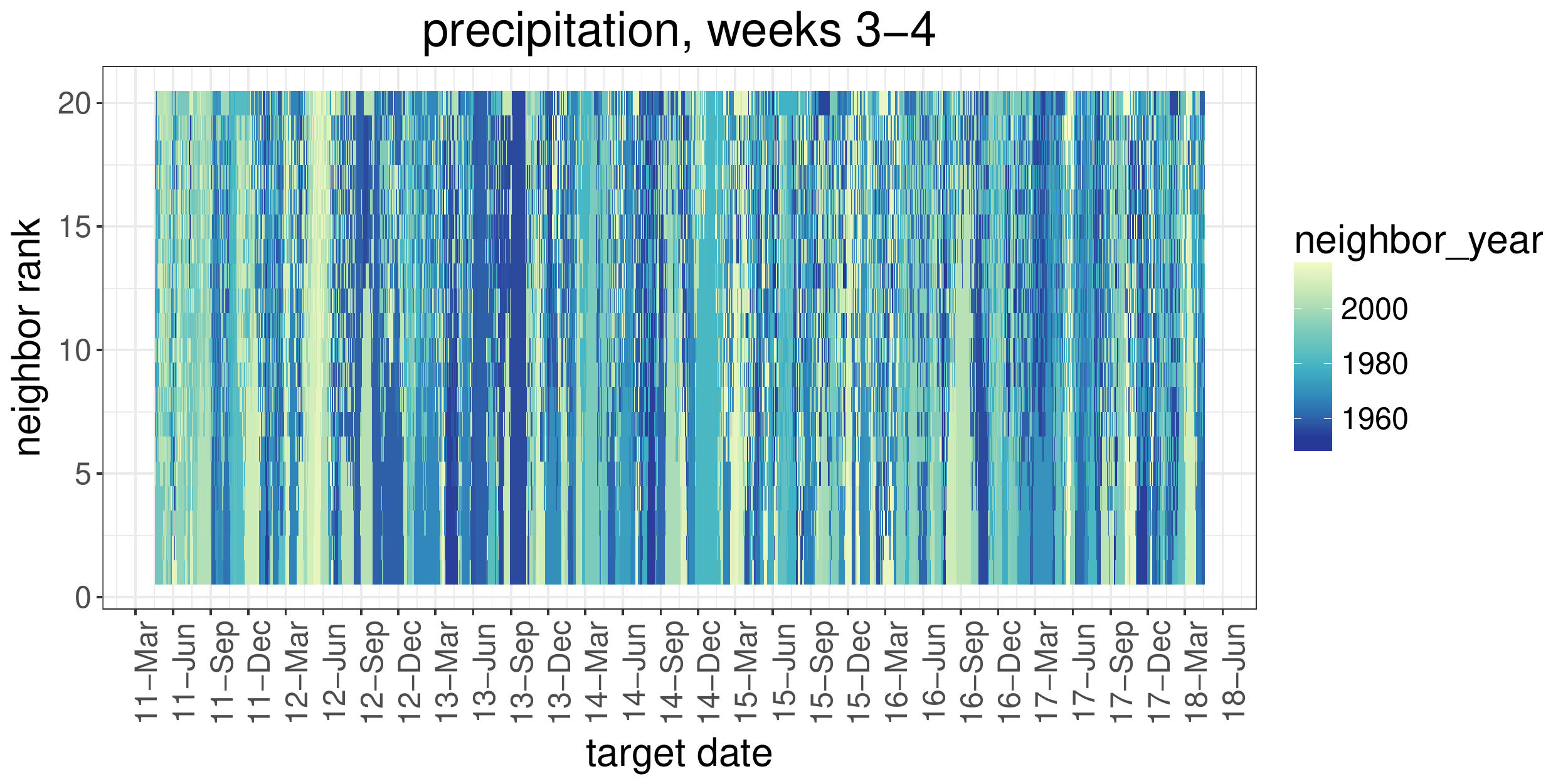}
\caption{Year of the 20 most similar neighbors learned by \autoknn (vertical axis ranges from $k=1$ to $20$) as a function of the target date to be predicted (horizontal axis). The plots for weeks 5-6 are very similar. The temperature neighbors are disproportionately drawn from recent years (post-2010), while the precipitation neighbors are not.}
\label{fig:knn_matrix_year_plot}
\end{figure*}}

\end{document}